%2multibyte Version: 5.50.0.2953 CodePage: 936
%\input{tcilatex}
%\input{tcilatex}
%\input{tcilatex}

\documentclass{article}
%%%%%%%%%%%%%%%%%%%%%%%%%%%%%%%%%%%%%%%%%%%%%%%%%%%%%%%%%%%%%%%%%%%%%%%%%%%%%%%%%%%%%%%%%%%%%%%%%%%%%%%%%%%%%%%%%%%%%%%%%%%%%%%%%%%%%%%%%%%%%%%%%%%%%%%%%%%%%%%%%%%%%%%%%%%%%%%%%%%%%%%%%%%%%%%%%%%%%%%%%%%%%%%%%%%%%%%%%%%%%%%%%%%%%%%%%%%%%%%%%%%%%%%%%%%%
\usepackage{amsfonts}
\usepackage{xcolor}
\usepackage{graphicx}
\usepackage{subfig}
\usepackage{multirow}
\usepackage{color}
\usepackage{enumerate}
\usepackage{amsmath}

%TCIDATA{OutputFilter=LATEX.DLL}
%TCIDATA{Version=5.50.0.2953}
%TCIDATA{Codepage=936}
%TCIDATA{<META NAME="SaveForMode" CONTENT="1">}
%TCIDATA{BibliographyScheme=Manual}
%TCIDATA{Created=Saturday, December 17, 2016 18:28:31}
%TCIDATA{LastRevised=Tuesday, May 16, 2017 22:33:39}
%TCIDATA{<META NAME="GraphicsSave" CONTENT="32">}
%TCIDATA{<META NAME="DocumentShell" CONTENT="Standard LaTeX\Blank - Standard LaTeX Article">}
%TCIDATA{CSTFile=40 LaTeX article.cst}
%TCIDATA{ComputeDefs=
%$f\left( \rho \right) =(a\left( \lambda _{1}\rho _{13}-\lambda _{2}\rho \rho
%_{13}\right) +b\left( \lambda _{2}\rho _{23}-\lambda _{1}\rho \rho
%_{23}\right) +\lambda _{1}\left( \rho \rho _{23}-\rho _{13}\right) )$
%}

\newtheorem{theorem}{Theorem}

\newtheorem{corollary}[theorem]{Corollary}

\newtheorem{definition}[theorem]{Definition}

\newtheorem{proposition}[theorem]{Proposition}
\newtheorem{remark}[theorem]{Remark}

\newenvironment{proof}[1][Proof]{\noindent\textbf{#1.} }{\ \rule{0.5em}{0.5em}}
\usepackage{setspace}
\onehalfspacing

\begin{document}

\title{On the optimal choice of strike
conventions in exchange option pricing}
\author{Elisa Al\`{o}s\thanks{%
Supported by grants ECO2014-59885-P and MTM2013-40782-P} \\
%EndAName
Dpt. d'Economia i Empresa \\
Universitat Pompeu Fabra \\
and Barcelona GSE\\
c/Ramon Trias Fargas, 25-27\\
08005 Barcelona, Spain\\
Email: elisa.alos@upf.edu\\
 \and 
Michael Coulon\\
Department of Business and Management\\
University of Sussex\\
Brighton BN1 9SL, UK\\
Email: m.coulon@sussex.ac.uk}
\maketitle

\begin{abstract}
An important but rarely-addressed option pricing question is how to choose appropriate strikes for implied volatility inputs when pricing more exotic multi-asset derivatives. By means of Malliavin Calculus we construct an optimal log-linear strike
convention for exchange options under stochastic volatility models. This
novel approach allows us to minimize the difference between the corresponding Margrabe computed price and the true option price.
We show that this optimal convention does not depend on the
specific stochastic volatility model chosen.  Numerical examples are given which provide strong support to the new methodology.
\end{abstract}

\vspace{2mm}

Keywords: Exchange option, Margrabe formula, Malliavin calculus.

\vspace{2mm}

AMS subject classification: 91G99, 60H07

\section{Introduction}

Spread options are recognized as important contracts in many financial
markets, and have been widely studied both by practitioners and academic
researchers. In particular, although also traded in other markets, spread options on commodities are closely linked to
the physical markets and the hedging or valuation needs of producers and
consumers, due to their parallels with physical assets like power plants,
refineries, storage facilities or pipelines. Such assets all have an
option-like nature with operational decisions and corresponding payoffs
depending predominantly on the spread between two commodity spot or forward
prices. While a variety of different considerations affect different spread
option types (ranging from calendar spreads to locational spreads to
input/output spreads like crack or spark), the dominant derivative pricing
challenges remain the same. \newline

In particular, the commonly-used lognormal assumption (e.g. the Geometric
Brownian Motion model) for underlying prices $S_{t}^{X}$ and $S_{t}^{Y}$
leads to a convenient closed-form pricing formula known as Margrabe's
formula (see Margrabe (1978)) given an `exchange option' payoff $(S_{X}^{1}-
S_{t}^{Y})^{+}$.  In the context of stochastic volatility models, we do not have an explicit closed-form expression for the corresponding option price. Some approximations can be found for example in Demspter and Hong (2000), Antonelli, Ramponi and Scarlatti (2009), Borovkova, Permana and van der Weide (2007), Al\`{o}s and Le\'{o}n (2016) or Al\`{o}s and Rheinl\"{a}nder (2016)). All of these approaches require the previous calibration of the corresponding model parameters. In some cases, prices can only be found by simulation or other numerical methods. Computation time can be particularly
onerous for physical asset valuation or hedging, whereby strings of hourly
or daily spread options over many years or even decades are required. For such reasons, Margrabe's formula is frequently employed
for useful and fast benchmark approximations to spread option prices.
 \newline

Despite the prominence of such tools, relatively little attention has been
paid to the key question of how to choose an appropriate pair of constant
volatility inputs $\sigma_{X}$ and $\sigma_{Y}$ for Margrabe formula,
 ideally maintaining consistency both with market data and modeling
preferences. A natural starting point is the implied volatility of the two
legs of the spread, typically observable from more liquidly traded single
asset vanilla calls or puts. However, a significant implied volatility skew
or smile (as well as term structure) exists in most markets, meaning that
there are many possible choices for both $\sigma_{X}$ and $\sigma_{Y}$ and
no obvious rule for which pair is most appropriate. Indeed, there is also no
standard yardstick for measuring which so-called `strike convention' rule is
best in this setting. In Swindle (2014), this important issue is highlighted
and discussed, along with some numerical examples which indicate that the
common industry solution (described as a `volatility look-up heuristic') can
lead to significant pricing differences compared to Monte Carlo values in a
simple jump diffusion model. \newline

In this paper, we aim to answer this crucial question by developing
a new theory for an optimal short-time strike convention, defined as the choice
of implied volatilities such that the resulting estimated option price (obtained from Margrabe's formula) matches the true option price
as closely as possible. This is equivalent to the choice such that the corresponding implied correlation (backed out from Margrabe's formula) matches the model 
correlation $\rho$. It is interesting to note that both Swindle (2014) and 
Alexander and Venkatramanan (2011) comment on how the choice of strike convention can impact the implied
correlation skew, smile or frown observed across different moneyness spread
options. As the underlying assets' returns correlation is clearly unrelated
to contract moneyness, Swindle (2014) describes this as \emph{``purely an
artifact of the interaction of skew with the Margrabe formulation"},
explaining that \emph{``skew risk can manifest itself as spurious
correlation risk simply due to the look-up heuristic"}.\newline

In order to investigate such effects and recommend a strike convention for
consistent spread option pricing, we rely on tools from Malliavin calculus that allow us
to derive the short-time limit of the sensitivity of implied volatilities to
moneyness, in the context of stochastic volatility models. Our proposed optimal strike convention is, to our knowledge, the first systematic approach to this problem. Moreover, it is model-independent since it depends only on the at-the-money 
implied volatilty levels and skews of the corresponding vanilla options.   Thus, it can serve as a very useful and practical `financial engineering' tool to improve option pricing accuracy within the financial industry.\newline

The paper is organized as follow. Section 2 is devoted to introducing the main problem and notations. In Section 3 we make use of Malliavin calculus techniques to derive an equation for our strike convention proposal. In Section 4 we determine explicitly this optimal convention in the class of log-linear strike conventions. Section 5
provides a range of numerical examples and tests to investigate the theory
presented in the paper and its implications in practice. 

\section{The objective, the price model and notation}

Assume, for the sake of simplicity, that the interest rate $r=0.$ Consider a two-asset stochastic volatility model of the form
\begin{eqnarray}
&&\frac{dS^X_{t}}{S_{t}^X}=\sigma^X_{t}dW^X_{t}  \nonumber
\\
&&\frac{dS^Y_{t}}{S_{t}^Y}=\sigma^Y_{t}dW^Y_{t},
\label{themodel}
\end{eqnarray}%
under a risk-neutral probability $P$. $W^X,W^Y$ are Brownian motions and $\sigma^X_{t},\sigma^Y_{t}$
are non-negative, right-continuous and square integrable processes adapted to the filtration
generated by another Brownian motion $Z$. We will use the notation 
\[
\left\langle W_{t}^X,Z\right\rangle =\rho _X,\left\langle \;
W_{t}^Y,Z\right\rangle =\rho _Y,\left\langle \;
W_{t}^X,W_{t}^Y\right\rangle =\rho .
\]%
%Notice that there exist two independent Brownian motions $W,B$ that are also
%independent of $Z$ such that 
%\[
%W_{t}^{1}=\rho _{1}Z_{t}+\sqrt{1-\rho _{1}^{2}}W_{t},
%\]%
%and 
%\[
%W_{t}^{2}=\frac{\rho -\rho _{1}\rho _{2}}{\sqrt{1-\rho _{1}^{2}}}W_{t}+\sqrt{%
%1-\frac{\rho _{23}^{2}+\rho ^{2}-2\rho \rho _{1}\rho _{2}}{1-\rho _{1}^{2}}}%
%B_{t}
%\]

\noindent  It\^o's representation theorem gives us that, for any fixed $s$
\[
\sigma _{s}^i=E\left( \sigma _{s}^i\right) +\int_{0}^{s}a^i(s,u)dZ_{u}, \quad i=X,Y.
\]%
for some square integrable processes $a^i(s,\cdot )$ adapted to the filtration generated by\ $Z$.

Now we describe some basic notation that is used in this article. For this,
we assume that the reader is familiar with the elementary results of the
Malliavin calculus, as given for instance in Nualart (2006).

The set $\mathbb{D}_{Z}^{1,2}$ will denote the domain of the derivative operator $%
D$ with respect to the Brownian Motion $Z$. It is well-known that $\mathbb{D}_{Z}^{1,2}$ is a dense subset of $%
L^{2}(\Omega)$ and that $D$ is a closed and unbounded operator from $%
L^{2}(\Omega)$ into $L^{2}([0,T]\times\Omega).$ We will also consider the
iterated derivatives $D^{n},$ for $n>1,$ whose domains will be denoted by 
$\mathbb{D}_{Z}^{n,2}.$  We will also make use of the notation $\mathbb{L}^{n,2}:=L^{2}([0,T];\mathbb{D}_{Z}^{n,2}).$

%The adjoint of the derivative operator $D^{Z}$, denoted by $\delta^{Z},$ is
%an extension of the It\^{o} integral in the sense that the set $%
%L_{a}^{2}([0,T]\times\Omega)$ of square integrable and adapted processes
%(with respect to the filtration generated by $U$) is included in Dom$%
%\delta^{Z}$ and the operator $\delta^{Z}$ restricted to $L_{a}^{2}([0,T]%
%\times\Omega)$ coincides with the It\^{o} integral. We will make use of the
%notation $\delta^{Z}(u)=\int_{0}^{T}u_{t}dZ_{t}.$ We recall that $\mathbb{L}%
%^{n,2}:=L^{2}([0,T];\mathbb{D}_{Z}^{n,2})$ is included in the domain of $%
%\delta^{Z}$ for all $n\geq1.$

\vspace{0.2cm}

We notice that, if $\sigma^2 \in $ $\mathbb{L}^{1,2}$ the
Clark-Ocone formula gives us that%
\[
a^i(s,u)=E_{u}\left( D_{u}(\sigma^i_{s})^{2}\right), \quad i=X,Y.
\]
Then, under suitable integrability conditions the change rule for the Malliavin derivative operator (see for example Nualart (2006)) gives us that
\[
a^i(s,u)=2E_{u}\left( \sigma^i_{s}D_{u}\sigma^i _{s}\right), \quad i=X,Y.
\]

We will also make use of the following notation:

\begin{itemize}
\item $BS\left( t,x,k,\sigma\right) $ denotes the classical Black-Scholes
call price with time to maturity $T-t,$ log stock price $x$, log
strike price $k$ and volatility $\sigma$.

\item $\mathcal{L}_{BS}=\partial_t+\frac{1}{2} \sigma^2\left(\partial^2_xx-\partial_x\right)$ denotes the classical Black-Scholes operator. Notice that $(\mathcal{L}_{BS} BS)(t,x,k,\sigma)=0$.

\item $X_{t}:=\log S^X_{t}, Y_{t}:=\log S^Y_{t}$.

\item $V_{t}=E_{t}(S_{T}^X-S_{T}^Y)^+$ is the exchange option price under the model (\ref{themodel}).

\item For every $0<t<T$ and $x,k>0,I_X(t,x,z)$ is the implied volatility of an
option with payoff $(S_{T}^X-\exp \left( z\right) )^{+}$  with $X_{t}=x.$ That is, 
\[
BS\left( t,x,k,I_X(t,x,z)\right) =E_{t}(S_{T}^X-\exp \left(z \right))^{+}.
\]%
Analogously, $I_Y(t,y,z)$ is the implied volatility of an option with
payoff $\left( S_{T}^Y-\exp \left( z\right) \right) ^{+}$ with $Y_t=y.$ 

\item $\tilde v_{t}:=\sqrt{\frac{1}{T-t}\left(\int_{t}^{T}\tilde\sigma_{s}^{2}ds\right) }$

\item $M^i_t:= E_t \int_0^T (\sigma^i_s)^2 ds, \quad i=X,Y.$

\item $\tilde\sigma_t :=\sqrt{(\sigma^X_t)^{2}+(\sigma^Y_t)^{2}-2\rho \sigma^X\sigma^Y}$

\item $\tilde M_t:=E_t \int_0^T (\tilde\sigma_s)^2 ds$

\end{itemize}
For the sake of simplicity, we will take $t=0$ and we will denote $I_X(x,z)=I_X(0,x,z)$ and $I_Y(y,z)=I_Y(0,y,z)$. Moreover, we denote $x=X_0$ and $y=Y_0$.

\vspace{2mm}

It is well known that, under the Black-Scholes model, $\sigma^X_t=\sigma_X$ and $\sigma^Y_t=\sigma_Y$, for all $t\in [0,T]$ and for some positive constants $\sigma_X$ and $\sigma_Y$. In this case, the option price $V_0$ can be computed analytically by means of Margrabe's formula.  More precisely, in this case, the price is given by
\begin{eqnarray}
%\label{margrabe}
%&&V_0=BS(0,x,y,\tilde\sigma)\nonumber \\
&&BS\left(0,x,y,\sqrt{\sigma_X^2+\sigma_Y^2-2\rho\sigma_X\sigma_Y}\right)
\end{eqnarray}
In the general stochastic volatility case, there is no analytical formula for this option price. One common strategy is to substitute $\sigma_X$ and $\sigma_Y$ by the vanilla implied volatilities $I_X(x,k_X)$ and $I_Y(y,k_Y)$, for some log strikes $k_X, k_Y$. But notice that, as these implied volatilities are not constant as a function of the strike, the corresponding price estimation

\begin{equation}
\label{min2}
BS\left(0,x,y,\sqrt{ I_X^2(x,k_X)+ I_Y^2(y,k_Y)-2\rho I_X(x,k_X)I_Y(y,k_Y)}\right)
\end{equation}
will depend strongly on the choice of the log strikes $k_X$ and $k_Y$. 
Despite of the relevance of this problem, there is currently no standard rule for choosing $k_X$ and $k_Y$ (see for example Swindle (2014)). Our aim in this paper is to develop a standard rule that will allow us to choose these strikes in such a way that the approximation ($\ref{min2}$) will be as close as possible to the true option price $V_0$ for a range of moneyness cases. More precisely, we want to find the pair $k_X:=k_X(x,y)$ and $k_Y:=k_Y(x,y)$ that minimizes the difference 

\begin{equation}
\label{min}
|V_0-BS\left(0,x,y,\gamma (x,y)\right)|,
\end{equation}
for short-time and near-the-money ($x\approx y$) options, where
\begin{equation}
\label{gamma2}
\gamma(x,y):=\sqrt{ I_X^2(x,k_X)+ I_Y^2(y,k_Y)-2\rho I_X(x,k_X)I_Y(y,k_Y)}.
\end{equation}
Notice that, if we define $\hat\gamma \left(x,y\right)$  as the quantity
such that%
\begin{equation}
\label{hatgamma}
V_{0}=BS(0,x,y,\hat\gamma \left( x,y\right) ),
\end{equation}
to minimize (\ref{min}) it is sufficient to minimize
$$
{\hat\gamma}(x,y)-\gamma (x,y).
$$
\begin{remark}
Note that it is also sufficient to minimize the quantity $\rho-\hat\rho$, where $\hat\rho$ denotes the implied correlation, defined by the equality 
$$
{\hat\gamma}(x,y):=\sqrt{ I_X^2(x,k_X)+ I_Y^2(y,k_Y)-2\hat\rho I_X(x,k_X)I_Y(y,k_Y)}.
$$
\label{impliedcorrelationremark}
\end{remark}
In the following section we will develop a methodology to choose the pair $(k_X,k_Y)$. As we have no explicit expressions for $\gamma$ and $\hat\gamma$, the main idea is to approximate these two quantities and to find the pair $(k_X,k_Y)$ that makes these approximations equal.  Towards this end, we will consider for any fixed $x$ the short-time limit of the Taylor expansion of the function $\gamma(x,\cdot)-\hat\gamma(x,\cdot)$.  This motivates the following definition of strike conventions of any order.  

\begin{definition}
Assume the model  (\ref{themodel}). We will say that a pair $(k_1,k_2)\in L^2(\mathbb{R}^2;\mathbb{R}^2)$ is a short-time optimal strike convention of order $n$ (a $n$-STOSC) if
\begin{equation}
\label{STOSC}
\lim_{T\to 0} \frac{\partial^i\gamma}{\partial^i y} (x,x)=\lim_{T\to 0} \frac{\partial\hat\gamma^i}{\partial^i y} (x,x),
\end{equation}
for any $i=0,...,n$, and where $\gamma$ and $\hat\gamma$ are defined as in \ref{gamma2} and \ref{hatgamma}, respectively.
\end{definition}

\begin{remark}
Notice that, as $n$ increases, $\hat\gamma$ is expected to be closer to $\gamma$ (and $\hat\rho$ closer to $\rho$) for short-term and near-the-money options. 
\end{remark}

\section{The construction of optimal strike conventions}

We will make use of the following hypotheses.

\vspace{0.2cm}
\begin{description}
\item[(H1)] For any $x\in\mathbb{R}$, $k_X(x,x)=k_Y(x,x)=x$.

\item [(H2)] $\sigma\in\mathbb{L}^{2,4}$.

\item [(H3)] There exist two positive constants $a$ and $b$ such that, for any $t\in[0,T]$, $a<\sigma_t<b$.

\item [(H4)] Hypothesis (H2) holds and there exists a positive  constant  $C>0$ such that, for any $%
0<r<s<T$,
\[
E_{r}\left[ D_{r}(\sigma^i _{s})^{2}\right]\leq C, \quad i=X,Y.
\]
\item [(H5)] Hypotheses (H2) and (H4) hold and, for any $t\in [0,T]$, there exists a constant $D^+\sigma^i_0$ such that as $T\rightarrow 0$,
$$
\sup_{r,s\in [0,T]}E_t |D_s\sigma^i_r-D^+\sigma^i_0|\to 0, \quad i=X,Y.
$$
\end{description}

We note that we choose (H3) and (H4) for the sake of simplicity, but these hypotheses can be substituted by adequate integrability conditions. On the other hand, (H2) and (H5) are satisfied by the classical stochastic volatility models, where the volatility is assumed to be a diffussion process (see for example Al\`{o}s and Ewald (2008) for the Heston case). In the case of fractional volatility models with $H<\frac12$ (see for example Al\`{o}s, Le\'{o}n and Vives (2007), Fukasawa (2011) or Bayer, Friz and Gatheral (2016)), (H5) is not satisfied.  Adapting our results to these models is left for future research.

\vspace{2mm}

Our first result establishes that all the strike conventions satisfying hypotheses (H1)-(H5) are $0$-STOSCs.
\begin{proposition}
\label{propo1}
\label{gamma} Consider the model (\ref{themodel}) and assume that $(k_X,k_Y)$ is a strike convention such that hypotheses (H1)-(H5) hold. Then $(k_X,k_Y)$ is a $0$-STOSC.
\end{proposition}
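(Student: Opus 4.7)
The plan is to show that $\lim_{T\to 0}\gamma(x,x)=\lim_{T\to 0}\hat\gamma(x,x)=\tilde\sigma_0$, where $\tilde\sigma_0=\sqrt{(\sigma^X_0)^2+(\sigma^Y_0)^2-2\rho\sigma^X_0\sigma^Y_0}$. The heart of the argument is that both quantities reduce, in the short-time at-the-money limit, to the corresponding spot volatility, the key ingredient being the classical result that the ATM implied volatility of a single-asset stochastic volatility model converges to the spot volatility as $T\to 0$ under Malliavin-type regularity.

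First I would use hypothesis (H1) to simplify $\gamma(x,x)$. Since $k_X(x,x)=k_Y(x,x)=x$, both implied volatility inputs in (\ref{gamma2}) are at-the-money, so
\[
\gamma(x,x)^{2}=I_X^{2}(x,x)+I_Y^{2}(x,x)-2\rho\, I_X(x,x)\,I_Y(x,x).
\]
Next, invoking the classical short-time ATM implied volatility limit for stochastic volatility models (see, e.g., Al\`os and Le\'on (2016)), which holds precisely under the integrability and regularity assumed in (H2)--(H5), I would conclude $\lim_{T\to 0}I_X(x,x)=\sigma^X_0$ and $\lim_{T\to 0}I_Y(x,x)=\sigma^Y_0$, so that $\lim_{T\to 0}\gamma(x,x)=\tilde\sigma_0$.

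For $\hat\gamma(x,x)$, I would apply the Margrabe change-of-numeraire trick: taking $S^Y$ as numeraire gives $V_0=S_0^Y\,E^{Q_Y}(R_T-1)^{+}$ where $R_t:=S^X_t/S^Y_t$. Under $Q_Y$, $R$ obeys a stochastic volatility dynamics $dR_t/R_t=\sigma^X_t d\tilde W^X_t-\sigma^Y_t d\tilde W^Y_t$ with instantaneous volatility $\tilde\sigma_t$. When $y=x$ we have $R_0=1$, so the transformed payoff is an at-the-money vanilla call in the $R$-model. Applying the same short-time ATM limit in the $R$-model (whose spot volatility at $0$ is $\tilde\sigma_0$) and using that $BS(0,x,x,\hat\gamma(x,x))=V_0$ coincides with $e^x$ times the ATM call on $R$, I would conclude $\lim_{T\to 0}\hat\gamma(x,x)=\tilde\sigma_0$. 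Combining the two limits yields (\ref{STOSC}) for $i=0$, which is the definition of a $0$-STOSC.

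The main obstacle is the rigorous justification of the short-time ATM limit in both settings. In the single-asset case this is a well-documented application of the Hull--White/Al\`os decomposition plus dominated convergence; the bounded-vol hypothesis (H3), the Malliavin regularity (H2) and especially (H5) (which lets us freeze $D\sigma$ at its time-$0$ value, $D^{+}\sigma^i_0$) are exactly what is needed to control the anticipating Itô correction. Extending the same argument to the transformed process $R$ requires verifying that the change-of-measure preserves the relevant Malliavin integrability of $(\sigma^X,\sigma^Y)$; this is immediate from (H3), since the Girsanov density is a true martingale. Once the ATM limit is established in both forms, matching the two limits to $\tilde\sigma_0$ is purely algebraic.
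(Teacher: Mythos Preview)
Your argument for $\lim_{T\to 0}\gamma(x,x)=\tilde\sigma_0$ coincides with the paper's Step~1. For $\hat\gamma(x,x)$, however, you take a genuinely different route: the paper does \emph{not} use the change of num\'eraire here, but instead invokes a decomposition result (Theorem~5 in Al\`os and Le\'on (2016)) giving $V_0=E\bigl(BS(0,x,x,\tilde v_0)\bigr)+o(1)$, and then applies the martingale representation theorem to strip off the random part of $BS(0,x,x,\tilde v_0)$ as $T\to 0$, concluding $\hat\gamma(x,x)\to\tilde\sigma_0$ via $BS^{-1}$. Your approach is cleaner and more self-contained: by passing to the $S^Y$-num\'eraire you reduce the exchange option to a single-asset ATM call on $R=S^X/S^Y$ with spot volatility $\tilde\sigma$, and then you simply reuse the same ATM convergence result already applied in Step~1. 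This is in fact precisely the device the paper employs later, in Theorem~\ref{conJorge}, to compute the first-order derivative $\partial\hat\gamma/\partial y$; you have just noticed that it works equally well at order zero. The paper's route has the minor advantage of avoiding any discussion of regularity under the new measure, while yours avoids importing an external decomposition theorem and keeps the whole proof within the ``ATM implied vol tends to spot vol'' paradigm.
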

\begin{proof}It suffices to see that $\lim_{T\to 0} \gamma(x,x)=\lim_{T\to 0} \hat\gamma(x,x)$. This proof will be decomposed into two steps.

{\it Step 1} Let us prove that
\begin{equation}
\label{lg}
\lim_{T\to 0} \gamma(x,x)=\tilde\sigma_0
\end{equation}
It is well known (see for example Durrleman (2008)) that the vanilla at-the-money implied volatilities $I_X, I_Y$ tend to the corresponding spot volatility. That is, 
$$
\lim_{T\to 0} \left(I_i(x,x)-E\left(\sigma^i_0\right)\right)=0,\hspace{0.2cm} i=X,Y.
$$
Now, taking into account (H1) and the fact that $\sigma^X$ and $\sigma^Y$ are right-continuous processes it follows that
$$
\lim_{T\to 0} I_i(x,k_i)=\sigma^i_0,\hspace{0.2cm}  i=X,Y,
$$
where $k_i=k_i(x,x)$. Now, as 
$$
\gamma(x,y):=\sqrt{ I_X^2(x,k_X)+ I_Y^2(y,k_Y)-2\rho I_X(x,k_X)I_Y(y,k_Y)},
$$
(\ref{lg}) follows.

{\it Step 2} Let us see that
\begin{equation}
\label{lhg}
\lim_{T\to 0} \hat\gamma(x,x)=\tilde\sigma_0.
\end{equation}
By its definition, we have that
$$
\hat\gamma(x,x)=BS^{-1}(0,x,y,V_0),
$$
where $BS^{-1}$ is the inverse of the Black-Scholes function in the sense that $$V_0=BS(0,x,y,BS^{-1}(0,x,y,V_0)).$$
Then, Theorem 5 in Al\`{o}s and Le\'{o}n (2016) gives us that 
$$V_0=E\left(BS(0,x,x,\tilde{v}_{0})\right)+o(1),$$
 which implies that
\begin{equation}
\label{hat}
\hat\gamma(x,x)=BS^{-1}\left(E\left(BS(0,x,x,\tilde{v}_{0})\right)  +o(1)\right).
\end{equation}
Moreover, the martingale representation theorem gives us that
$$
E\left(BS(0,x,x,\tilde{v}_{0})\right)=BS(0,x,x,\tilde{v}_{0})+\int_0^T A(T,s) dZ_s,
$$
for some adapted and square integrable process $A(T,\cdot)$.
This, jointly with (\ref{hat}) gives us that
\begin{eqnarray}
\lim_{T\to 0}\hat\gamma (x,x)
&=&\lim_{T\to 0} BS^{-1}\left(0,x,x,\left(BS(0,x,x,\tilde{v}_{0})+\int_0^T A(T,s) dZ_s+o(1)\right)\right)\nonumber\\
&=&\lim_{T\to 0} BS^{-1}\left(0,x,x,\left(BS(0,x,x,\tilde{v}_{0}\right)\right)\nonumber\\
&=&\lim_{T\to 0} \tilde{v}_0\nonumber\\
&=&\tilde\sigma_0,
\end{eqnarray}
and this allows us to complete the proof
\end{proof}

In order to identify the strike conventions that are $1$-STOCs, we will need the following result (see  Al\`{o}s, Le\'{o}n and Vives (2007)).

\begin{theorem}
\label{J1}
Consider the model (\ref{themodel}) and assume that hypotheses (H1)-(H5) hold. Then, for $i=X,Y$,
$$
\lim_{T\to 0}\frac{\partial I_i}{\partial z}=\frac{\rho_i D^+\sigma^i_t}{2\sigma^i_t}=\frac1{4\sigma^3_0 }\lim_{T\to 0}\frac{\langle \log S^i,M^i\rangle_T-\langle \log S^i,M^i\rangle}{T}
$$
\end{theorem}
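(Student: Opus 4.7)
The plan is to combine implicit differentiation of the implied-volatility equation with the Al\`os--Le\'on--Vives anticipating-It\^o decomposition of the vanilla price, and then to read off the leading order as $T\to 0$.

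First, I would start from the defining identity $BS(0,x,z,I_i(x,z))=u_i(z)$, where $u_i(z):=E[(S_T^i-e^z)^+]$. Implicit differentiation in $z$ gives
$$
\frac{\partial I_i}{\partial z}(x,z)\;=\;\frac{\partial_z u_i(z)\;-\;\partial_z BS(0,x,z,I_i)}{\partial_\sigma BS(0,x,z,I_i)},
$$
so the skew is a ratio with Black--Scholes vega in the denominator. At $z=x$ the vega behaves like $c\,\sigma_0^i\sqrt{T}$, so the task reduces to extracting the $\sqrt{T}$-part of the numerator.

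Next, I would apply the Al\`os decomposition. Using It\^o's formula in the anticipating sense on $BS(t,X_t^i,z,v_t^i)$ with $v_t^i:=\sqrt{(T-t)^{-1}\int_t^T(\sigma_s^i)^2 ds}$, the operator identity $\mathcal{L}_{BS}BS\equiv 0$ makes the drift cancellation exact whenever the volatility is frozen; the surviving contribution comes from the cross variation $d\langle X^i,M^i\rangle_s=\rho_i\sigma_s^i\psi_s^i\,ds$ with $\psi_s^i:=\int_s^T a^i(r,s)\,dr$. Taking expectations yields a representation of the schematic form
$$
u_i(z) \;=\; E\bigl[BS(0,x,z,v_0^i)\bigr] \;+\; \frac{\rho_i}{2}\,E\!\int_0^T\!\bigl(\partial_x\partial_\sigma^2 BS-\partial_\sigma^2 BS\bigr)(s,X_s^i,z,v_s^i)\,\sigma_s^i\psi_s^i\,ds.
$$
Differentiating in $z$ and subtracting $\partial_z BS(0,x,z,I_i(x,z))$ isolates the Malliavin correction.

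Then I would run the short-time analysis. Hypothesis (H3) keeps $v_s^i$ bounded and bounded away from $0$, so $v_s^i\to\sigma_0^i$; hypothesis (H5) gives $a^i(r,s)=2E_s[\sigma_r^i D_s\sigma_r^i]\to 2\sigma_0^i D^+\sigma_0^i$ uniformly, whence $\psi_s^i \sim 2\sigma_0^i D^+\sigma_0^i(T-s)$. A careful bookkeeping of $\sqrt{T}$-powers at $z=x$ shows that $\partial_z u_i-\partial_z BS$ is of order $\sqrt{T}$ with leading coefficient $\rho_i D^+\sigma_0^i$ times a pure Black--Scholes constant, while $\partial_\sigma BS$ is also of order $\sqrt{T}$ with a matching constant, yielding the claimed limit $\rho_i D^+\sigma_0^i/(2\sigma_0^i)$. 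The rewriting in terms of covariation follows immediately from $d\langle \log S^i,M^i\rangle_s = \rho_i\sigma_s^i\psi_s^i\,ds$, which matches the normalization $1/(4\sigma_0^{i,3})$ on the right-hand side after dividing by the Black--Scholes vega.

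The main obstacle in this plan is the rigorous control of the remainder terms in the Taylor expansion of $BS(s,X_s^i,z,v_s^i)$ around $(0,x,x,\sigma_0^i)$: showing that every contribution except the one identified above is $o(\sqrt{T})$ requires the Malliavin regularity $\sigma\in\mathbb{L}^{2,4}$ of (H2) together with the uniform bounds in (H3)--(H4). An alternative --- and the one actually used in the paper --- is simply to invoke the existing Al\`os--Le\'on--Vives (2007) short-time skew formula directly for each leg $i\in\{X,Y\}$, since Theorem \ref{J1} is nothing more than its one-asset statement applied separately to $S^X$ and $S^Y$.
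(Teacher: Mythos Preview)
Your proposal is correct and you have already identified the paper's actual strategy in your final paragraph: the paper does not reprove the skew asymptotic but simply invokes Theorem~6.3 of Al\`os--Le\'on--Vives (2007) to obtain $\lim_{T\to 0}\partial I_i/\partial z=\rho_i D^+\sigma_0^i/(2\sigma_0^i)$, and then derives the covariation expression from the Clark--Ocone representation $dM_t^i=\bigl(\int_t^T E_t[D_t(\sigma_u^i)^2]\,du\bigr)\,dZ_t$ exactly as you outline. The longer route you sketch first---implicit differentiation of the implied-volatility identity, the anticipating It\^o decomposition of the vanilla price, and the $\sqrt{T}$ bookkeeping under (H2)--(H5)---is precisely the machinery underlying the cited ALV theorem, so it is not a different proof but a reconstruction of the reference the paper appeals to.
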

\begin{proof}
Theorem 6.3 in Al\`{o}s, Le\'{o}n and Vives (2007) gives us that 
$$\frac{\partial I_i}{\partial y}=-\frac{\rho_i D^+\sigma^i_0}{2\sigma^i_0}.$$
for $i=X,Y$.  Now, as $\frac{\partial I_i}{\partial z}=-\frac{\partial I_i}{\partial z}$, the first equality follows. For the second one, notice that Clark-Ocone formula (see for example Nualart (2005)) gives us that
$$
(\sigma^i_t)^2=E\left((\sigma^i_t)^2\right)+\int_0^t E_r\left(D_r (\sigma^i_t)^2\right)dZ_r, \quad i=X,Y,
$$
from where we can easily deduce that
$$
dM_t^i=\left(\int_t^T E_t \left(D_t (\sigma^i_u)^2\right)du\right)dZ_t=2\left(\int_t^T E_t (\sigma^i_u D_t \sigma^i_u)du\right)dZ_t, \quad i=X,Y,
$$
from where the second equality holds.
\end{proof}
\begin{remark}
The above result gives us that the derivatives $\frac{\partial I_i}{\partial z}, i=X,Y$ depend only on the quadratic covariation between $M$ and $\log_S^i$ and on the volatility $\sigma^i$.
\end{remark}
Define $\frac{d\hat P}{d P}=e^{Y_T-Y_0}$.  The set $\mathbb{D}_{\hat Z}^{1,2}$ will denote the domain of the derivative operator $\hat D$ under $\hat P$, with respect to $\hat Z$.  We will write $\mathbb{L}_{\hat Z}^{1,2}=L^2([0,T],\mathbb{D}_{\hat Z}^{1,2})$.  Notice that as $T\to 0$, $\sup_{r,s\in [0,T]}\hat E_t |\hat D_s\sigma^i-D^+\sigma^i_0|\to 0$, for $i=X,Y$, where $D^+\sigma_0^i$ are defined as in (H5) and $\hat E_t$ is the conditional expectation with respect to $\hat P$.  

\begin{theorem} 
\label{conJorge}
Consider the model (\ref{themodel}) and assume that $\tilde\sigma\in\mathbb{L}_{\hat Z}^{1,2}$.
Then
\begin{equation}
\label{8}
\lim_{T\to 0}\frac{\partial \hat\gamma }{\partial y}(x,x)= \frac{\rho_X\sigma_0^X-\rho_Y \sigma_0^Y}{2\tilde\sigma^3}\left[ D^+\sigma_0^X(\sigma_0^X-\rho \sigma_0^Y)+ D^+\sigma_0^Y(\sigma_0^Y-\rho \sigma_0^X)\right].
\end{equation}
\end{theorem}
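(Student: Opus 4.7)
The plan is to reduce the exchange option to a vanilla call on the ratio $R_t := S_t^X/S_t^Y$ via the numeraire change the paper has just introduced, and then apply Theorem \ref{J1} to that vanilla under $\hat P$.

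Under $\hat P$, a short Girsanov computation gives $dR_t/R_t = \sigma^X_t\,d\hat W^X_t - \sigma^Y_t\,d\hat W^Y_t$ for two $\hat P$-Brownian motions $\hat W^X, \hat W^Y$, so $R$ is a $\hat P$-local martingale with volatility $\tilde\sigma_t$, and the instantaneous correlation between $\log R$ and the vol driver $\hat Z$ is
\begin{equation*}
\tilde\rho_t = \frac{\sigma^X_t \rho_X - \sigma^Y_t \rho_Y}{\tilde\sigma_t}.
\end{equation*}
Since $(S_T^X - S_T^Y)^+ = S_T^Y(R_T - 1)^+$ and $BS(0,x,y,\sigma) = e^y BS(0, x-y, 0, \sigma)$ by homogeneity,
\begin{equation*}
V_0 = e^y\,BS\bigl(0,\, x-y,\, 0,\, I_R(x-y, 0)\bigr),
\end{equation*}
where $I_R(m,k)$ denotes the $\hat P$-implied volatility of a call on $R$ with log-underlying $m$ and log-strike $k$. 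Comparing with $V_0 = BS(0,x,y,\hat\gamma(x,y))$ forces $\hat\gamma(x,y) = I_R(x-y, 0)$, and the multiplicative dynamics of $R$ under $\hat P$ make $I_R(m,k)$ a function of $k - m$ alone, so
\begin{equation*}
\frac{\partial \hat\gamma}{\partial y}(x,x) = -\,\frac{\partial I_R}{\partial m}(0,0) = \frac{\partial I_R}{\partial k}(0,0).
\end{equation*}

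The vanilla call on $R$ now fits the one-asset setting of Theorem \ref{J1} with $\tilde\sigma, \hat Z, \hat D, \hat P$ in place of $\sigma^i, Z, D, P$. The hypotheses are precisely those the paper assembles right before the statement: $\tilde\sigma \in \mathbb{L}^{1,2}_{\hat Z}$ and the uniform convergence $\sup_{r,s\in[0,T]} \hat E_t|\hat D_s\sigma^i_r - D^+\sigma^i_0| \to 0$, which via the Malliavin chain rule applied to $\tilde\sigma^2 = (\sigma^X)^2 + (\sigma^Y)^2 - 2\rho \sigma^X\sigma^Y$ transfers to $\tilde\sigma$ itself with
\begin{equation*}
D^+\tilde\sigma_0 = \tilde\sigma_0^{-1}\bigl[(\sigma_0^X - \rho\sigma_0^Y)\,D^+\sigma_0^X + (\sigma_0^Y - \rho\sigma_0^X)\,D^+\sigma_0^Y\bigr].
\end{equation*}
Theorem \ref{J1} then yields $\lim_{T\to 0}\partial_k I_R(0,0) = \tilde\rho_0\,D^+\tilde\sigma_0/(2\tilde\sigma_0)$, and substituting the expressions for $\tilde\rho_0$ and $D^+\tilde\sigma_0$ reproduces the right-hand side of (\ref{8}).

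I expect the main obstacle to be this last invocation: in Theorem \ref{J1} the correlation $\rho_i$ between asset and vol driver is a \emph{constant} model parameter, whereas the effective correlation $\tilde\rho_t$ between $\log R$ and $\hat Z$ is time-dependent. That only $\tilde\rho_0$ survives the short-time ATM limit is most transparent via the quadratic-covariation form of the theorem ($\langle \log R, \tilde M\rangle_T / T$), which is invariant under the equivalent change of measure and determined to leading order by the spot coefficients. Some care is still required to transplant the integrability bounds (H3), (H4) through the $S^Y$-numeraire change, to verify that the Clark-Ocone expansion of $\tilde M$ under $\hat P$ produces the expected leading term, and to control the Al\`os-Le\'on-Vives error terms uniformly in $T$.
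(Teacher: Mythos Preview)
Your approach is essentially the same as the paper's: change numeraire to $\hat P$, recognize $\hat\gamma$ as the implied volatility of a vanilla on $U_t=e^{X_t-Y_t}$ with stochastic volatility $\tilde\sigma$, and invoke Theorem~\ref{J1} via the quadratic-covariation form together with the chain rule for $\hat D\tilde\sigma^2$. Your treatment is in fact slightly more explicit than the paper's in two places: you spell out the identification $\hat\gamma(x,y)=I_R(x-y,0)$ and the passage $\partial_y\hat\gamma=-\partial_m I_R=\partial_k I_R$, and you flag the time-dependence of $\tilde\rho_t$ and explain why the covariation formulation of Theorem~\ref{J1} absorbs it---the paper simply goes straight to $\langle U,\tilde M\rangle_T/T$ without comment.
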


\begin{proof}
We have that
\begin{equation}
\label{vbs1}
V_0=BS(0,x,y,\hat\gamma).
\end{equation}
On the one hand, a direct computation gives us that
\begin{equation}
\label{vbs2}
BS(0,x,y,\hat\gamma)=e^y BS(0,x-y,0,\hat\gamma)
\end{equation}
On the other hand
\begin{eqnarray}
\label{vbs}
&&V_0=E\left(e^{X_T}-e^{Y_T}\right)^+\nonumber\\
&&=e^{Y_0}\hat{E}\left(e^{X_T-Y_T}-1\right)^+\nonumber\\
&&=e^{y}\hat{E}\left(e^{X_T-Y_T}-1\right)^+
\end{eqnarray}
where $\hat{E}$ denotes the expectation with respect to the probability measure $\hat{P}$. Notice that, under $\hat{P}$, the process $U_t:=e^{X_t-Y_t}$ satisfies  
$$dU_t=U_t  (\sigma^X_t d\hat{W}_t^X-\sigma^Y_t d\hat{W}_t^Y),$$ 
where $\hat{W}^X,\hat{W}^Y$ are $\hat{P}$-Brownian motions.
Then, (\ref{vbs2}) and (\ref{vbs}) gives us that (\ref{vbs1}) is equivalent to
$$
\hat{E}(U_T-1)^+=BS(0,x-y,0,\hat\gamma).
$$
Notice that $\hat\gamma$ is the implied volatility of a vanilla option with strike 1 on an underlying $U_t$, with volatility $\tilde\sigma$.  
Then Theorem \ref{J1} gives us that
$$\lim_{T\to 0} \frac{\partial \hat\gamma}{\partial z}(x,x) = \frac1{4\tilde\sigma_0^3 T}\lim_{T\to 0} \frac{\langle U,\tilde M\rangle_T}{T}.$$
Now, as  
$$d\tilde M_t=\left(\int_t^T \hat E_r (\hat D_r\tilde\sigma_t^2)dr\right)d\hat Z_t$$
and 
$$\hat D_r\tilde\sigma_t^2=2\sigma_t^X \hat D_r\sigma_t^X + 2\sigma_t^Y \hat D_r\sigma_t^Y-2\rho\sigma_t^X \hat D_r\sigma_t^Y-2\rho\sigma_t^Y \hat D_r\sigma_t^X.$$
we get that 
$$\frac1{4\tilde\sigma_0^3 T} \lim_{T\to 0} \langle U,\tilde M \rangle=  \frac{\rho_X\sigma_0^X-\rho_Y\sigma_0^Y}{2\tilde\sigma_0^3}\left[ D^+\sigma_0^X(\sigma_0^X-\rho \sigma_0^Y)+D^+\sigma_0^Y(\sigma_0^Y-\rho \sigma_0^X)\right].$$
This completes the proof.
\end{proof}

%\vspace{10cm}
%%Notice that $\lambda_1 \hat{W}_t^1-\lambda_2 \hat{W}_t^2$ is a Gaussian noise under $\hat{P}$. Then the volatility of $Y$ is given by $\bar{v}$ and  
%\begin{eqnarray}
%\langle\hat{M},\ln Y\rangle_t&&=C^2(\lambda_1,\lambda_2,\rho)D^+\sigma_t^2\left(\lambda_1\rho_1-
%\lambda _{2}\rho_2\right)\nonumber \\
%&&=2\sigma_t C^2(\lambda_1,\lambda_2,\rho)D^+\sigma_t\left(\lambda_1\rho_1-
%\lambda _{2}\rho_2\right).
%\end{eqnarray}
%Then
%\begin{eqnarray}
%\label{implicit2}
%\frac{\partial\hat\gamma}{\partial y}&=&\frac{2\sigma_t}{4v^3_t} C^2(\lambda_1,\lambda_2,\rho)D^+\sigma_t\left(\lambda_1\rho_1-
%\lambda _{2}\rho_2\right)\nonumber\\
%&=&\frac{ D^+\sigma_t}{2C\left( \lambda _{1},\lambda _{2},\rho \right)\sigma_t }\left(\lambda_1\rho_1-
%\lambda _{2}\rho_2\right),
%\end{eqnarray}
%as we wanted to proof.
%%\end{proof}

\vspace{2mm}

In the next theorem we establish a condition for a strike convention to be a $1$-STOSC. This is the main result of this paper.

\begin{theorem}
\label{firstorder}
Consider the model (\ref{themodel}) and assume that hypotheses (H1)-(H5) hold. Then, a strike covention $(k_1,k_2)$ is a $1$-STOSC if and only if
\begin{eqnarray}
\label{conve-general}
&&\frac{\rho_X\sigma_0^X-\rho_Y \sigma_0^Y}{2\tilde\sigma_0^3}\left[ D^+\sigma_0^X(\sigma_0^X-\rho \sigma_0^Y)+ D^+\sigma_0^Y(\sigma_0^Y-\rho\sigma_0^X)\right]\\
&&\quad =\lim_{T\to 0}\left\{\frac{1}{\gamma }\left[ I_X\frac{\partial I_X}{\partial z}\frac{\partial k_X}{\partial y}
+I_Y\left( \frac{\partial I_Y}{\partial z}\frac{\partial k_Y}{\partial y}+\frac{\partial I_Y}{\partial y}\right)\right.\right. \nonumber \\
&&\qquad\qquad\qquad\left.\left.-\rho I_X\left( \frac{\partial I_Y}{\partial z}\frac{\partial k_Y}{\partial y}+\frac{\partial I_Y}{\partial y}\right) 
-\rho I_Y\frac{\partial I_X}{\partial z}\frac{\partial k_X}{\partial y}\right](x,x) \right\}\nonumber
\end{eqnarray}
\end{theorem}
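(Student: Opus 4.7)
The plan is to reduce the theorem to a single chain-rule computation. By definition $(k_X,k_Y)$ is a $1$-STOSC iff (\ref{STOSC}) holds for $i=0$ and $i=1$.  The case $i=0$ is exactly Proposition \ref{propo1}, so the entire content of the theorem is the first-order identity
$$
\lim_{T\to 0}\frac{\partial\gamma}{\partial y}(x,x)=\lim_{T\to 0}\frac{\partial\hat\gamma}{\partial y}(x,x).
$$
The right-hand side has already been computed in Theorem \ref{conJorge} and is precisely the first line of (\ref{conve-general}).  Hence all that remains is to identify the left-hand side with the bracketed expression on the right-hand side of (\ref{conve-general}), and to pass to the limit $T\to 0$.

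For the left-hand side I would simply differentiate the squared defining relation
$$
\gamma(x,y)^2=I_X^2(x,k_X(x,y))+I_Y^2(y,k_Y(x,y))-2\rho\,I_X(x,k_X(x,y))\,I_Y(y,k_Y(x,y))
$$
with respect to $y$, keeping $x$ fixed. Observe that $I_X$ depends on $y$ only through $k_X$, while $I_Y$ depends on $y$ both directly (through its first slot) and through $k_Y$. Applying the chain rule and dividing by $2\gamma$ yields, term for term, the expression inside the braces on the right-hand side of (\ref{conve-general}), with $\gamma$ in the denominator. Evaluating at $y=x$ is immediate once the formula for $\partial_y \gamma$ has been written down.

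The final step is to pass to the short-time limit. Proposition \ref{propo1} gives $\gamma(x,x)\to\tilde\sigma_0>0$, so the prefactor $1/\gamma$ is well behaved under (H3); Theorem \ref{J1} guarantees that each factor $\partial I_i/\partial z$ has a finite short-time limit, and standard at-the-money asymptotics give the limits of $I_X$, $I_Y$ and $\partial I_Y/\partial y$.  Equating the two limits yields (\ref{conve-general}), and since the argument is a genuine equality of limits the characterization works in both directions at once. The main technical point to be careful about is the justification that the $y$-derivative commutes with the short-time limit; under (H1)--(H5) this reduces to the uniform short-time control of the Malliavin derivatives $D_s\sigma^i_r$ supplied by (H5), together with the tacit smoothness of $(k_X,k_Y)$ in a neighbourhood of the diagonal required for (\ref{STOSC}) itself to be meaningful.
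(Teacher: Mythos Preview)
Your proposal is correct and follows essentially the same route as the paper: the order-$0$ condition is Proposition~\ref{propo1}, the limit of $\partial\hat\gamma/\partial y$ is supplied by Theorem~\ref{conJorge}, and the bracketed expression on the right of (\ref{conve-general}) is nothing but $\partial\gamma/\partial y(x,x)$ computed by the chain rule applied to $\gamma^2$. The paper's own proof is in fact a two-line appeal to Theorems~\ref{J1} and~\ref{conJorge}, so your write-up simply makes explicit the chain-rule step and the existence of the short-time limits that the paper leaves implicit; note, however, that your closing caveat about commuting the $y$-derivative with the $T\to 0$ limit is unnecessary here, since both the definition of $1$-STOSC and the statement (\ref{conve-general}) take the limit \emph{after} differentiating.
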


 \begin{proof}
We have to prove that
$$
\lim_{T\to 0} \left(\frac{\partial \gamma }{\partial y}-\frac{\partial \hat\gamma }{\partial y}\right)(x,x)=0.
$$
Theorems \ref{J1} and \ref{conJorge} directly give us the desired result.
%A direct computation gives us that
%\begin{eqnarray}
%\label{fromderivatives}
%&&\frac{\partial \gamma }{\partial y}\nonumber\\
%&&=\frac{1}{2\gamma }\left[ 2I_{1}%
%\frac{\partial I_{1}}{\partial z}\frac{\partial k_{1}}{\partial y}%
%+2I_{2}\left( \frac{\partial I_{2}}{\partial z}\frac{\partial k_{2}}{%
%\partial y}+\frac{\partial I_{2}}{\partial y}\right) -\rho I_{1}\left( \frac{%
%\partial I_{2}}{\partial z}\frac{\partial k_{2}}{\partial y}+\frac{\partial
%I_{2}}{\partial x}\right) -\rho I_{2}\frac{\partial I_{1}}{\partial z}\frac{%
%\partial k_{1}}{\partial y}\right]  \nonumber\\
%&&=\frac{1}{C\left( \lambda _{1},\lambda _{2},\rho \right) }\left[
%\left( \lambda _{1}-\rho \lambda _{2}\right) \frac{\partial I_{1}}{\partial z%
%}\frac{\partial k_{1}}{\partial y}+\left( \lambda _{2}-\rho \lambda
%_{1}\right) \left( \frac{\partial I_{2}}{\partial z}\frac{\partial k_{2}}{%
%\partial y}+\frac{\partial I_{2}}{\partial y}\right) \right].
%\end{eqnarray}%
%On the other hand, Theorems \ref{J1} and \ref{conJorge}  gives us that
%
%\begin{equation}
%\label{88}
%\lim_{T\to 0}\frac{\partial \hat\gamma }{\partial y}(x,x)=\frac{1}{C\left( \lambda _{1},\lambda _{2},\rho \right)}\left(\lambda_1\frac{\partial I_{1}}{\partial z}-\lambda _{2}\frac{\partial I_{2}}{\partial z}\right)(x,x).
%\end{equation}
%Then,  (\ref{fromderivatives}) and (\ref{88}) imply the desired result.
\end{proof}

\begin{remark}
If $\rho_X\ne 0$ and $\rho_Y\ne 0$, then $D^+\sigma_0^i =\frac{\sigma_0^i}{\rho_i}\lim_{T\to 0}\frac{\partial I_i}{\partial z}$ for $i=X,Y$.  Then the left hand side in \eqref{conve} can be written as
\begin{equation}
\label{rhoXrhoYremark}
\lim_{T\to 0}\frac{\rho_X I_X -\rho_Y I_Y}{\gamma^3}\left[\frac{\partial I_X}{\partial z}\frac{I_X}{\rho_X}(I_X-\rho I_Y)+\frac{\partial I_Y}{\partial z}\frac{I_Y}{\rho_Y}(I_Y-\rho I_X)\right](x,x).
\end{equation}
This gives us a model-free condition for a $1$-STOSC, in the sense that a specific model for the volatility processes is not needed. 
\end{remark}

While various different cases of the general rule above may be considered, a convenient particular case of a strike convention is obtained if $\sigma^X_t=\lambda_X \sigma_t$ and $\sigma^Y_t=\lambda_Y \sigma_t$, where $\lambda_X$ and $\lambda_Y$ are positive constants and $\sigma_t$ is a non-negative, right-continuous and square integrable process adapted to the filtration generated by $Z_t$.   We note that this case of a single volatility process shifted by a constant for each of the two assets is a generalization of the model introduced for correlation options in Bakshi and Madan (2000) and also for spread options in Dempster and Hong (2000).  For convenience we shall refer to this model as the one-volatility two-levels (1V2L) model.  The following corollary demonstrates that for the 1V2L model a strike convention can be derived either in terms of model parameters or market observables, namely the short-time limits of the corresponding vanilla implied volatility levels and skews.   
\begin{corollary}
\label{STOSCcorollary}
Assume the 1V2L model.  Then, a strike covention $(k_1,k_2)$ is a $1$-STOSC if and only if
\begin{eqnarray}
\label{convention}
&&\lim_{T\to 0}\left[
\left( 1-\rho \frac{I _Y}{I_X}\right) \frac{\partial I_{X}}{\partial z%
}\frac{\partial k_{X}}{\partial y}+\left( \frac{I _{Y}}{I_X}-\rho \right) \left( \frac{\partial I_{Y}}{\partial z}\frac{\partial k_{Y}}{%
\partial y}+\frac{\partial I_{Y}}{\partial y}\right)\right. \\
&&\qquad-\left.\frac{\partial I_{X}}{\partial z}+\frac{I _{Y}}{I_X}\frac{\partial I_{Y}}{\partial z}\right](x,x)= 0. \nonumber
\end {eqnarray}
or equivalently (in terms of model parameters):
\begin{equation}
\label{conve}
\left(1-\rho\frac{\lambda_Y}{\lambda_X}\right)\frac{\rho_X}{\rho_Y} \frac{\partial k_X}{\partial y} + \left(\frac{\lambda_Y}{\lambda_X}-\rho\right)\left(\frac{\partial k_Y}{\partial y}-1\right) = \frac{\rho_X}{\rho_Y}-\frac{\lambda_Y}{\lambda_X}
\end{equation}
\end{corollary}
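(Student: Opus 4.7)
The plan is to specialize the general 1-STOSC condition from Theorem \ref{firstorder} to the 1V2L model and then massage it into the two equivalent forms. In the 1V2L setting one has $\sigma^i_0 = \lambda_i \sigma_0$ and $D^+ \sigma^i_0 = \lambda_i D^+\sigma_0$ for $i=X,Y$, together with
$$\tilde\sigma_0^2 = \sigma_0^2\bigl(\lambda_X^2+\lambda_Y^2 - 2\rho\lambda_X\lambda_Y\bigr).$$
Substituting these into the left-hand side of (\ref{conve-general}), the bracket reduces to $\sigma_0 D^+\sigma_0(\lambda_X^2+\lambda_Y^2-2\rho\lambda_X\lambda_Y) = D^+\sigma_0 \, \tilde\sigma_0^2/\sigma_0$, so after cancellation the LHS collapses to $\bigl((\rho_X\lambda_X-\rho_Y\lambda_Y)D^+\sigma_0\bigr)/(2\tilde\sigma_0)$. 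I would check this simplification carefully, as tracking the cancellation of the cube $\tilde\sigma_0^3$ is the step most prone to algebra errors.

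Next I would handle the right-hand side of (\ref{conve-general}). The crucial observation is translation invariance of the vanilla implied volatility in our model: since $\sigma^Y$ is independent of $Y$, the law of $Y_T - y$ does not depend on $y$, which combined with the scaling property $BS(0,y,z,\sigma)=e^y BS(0,0,z-y,\sigma)$ forces $I_Y(y,z)$ to depend only on the log-moneyness $z-y$. Hence $\partial I_Y/\partial y = -\partial I_Y/\partial z$ (and similarly for $I_X$), so the combination $\frac{\partial I_Y}{\partial z}\frac{\partial k_Y}{\partial y}+\frac{\partial I_Y}{\partial y}$ appearing in (\ref{conve-general}) equals $\frac{\partial I_Y}{\partial z}\bigl(\frac{\partial k_Y}{\partial y}-1\bigr)$. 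Grouping the $\frac{\partial k_X}{\partial y}$ terms yields a coefficient $(I_X-\rho I_Y)\partial I_X/\partial z$, and grouping the $\bigl(\frac{\partial k_Y}{\partial y}-1\bigr)$ terms yields $(I_Y-\rho I_X)\partial I_Y/\partial z$.

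Equating the two sides and using the short-time limits from Theorem \ref{J1}, namely $I_i \to \lambda_i\sigma_0$ and $\partial I_i/\partial z \to \rho_i D^+\sigma_0/(2\sigma_0)$, one can verify that the simplified LHS is precisely $\lim_{T\to 0} \gamma^{-1}\bigl(I_X\,\partial I_X/\partial z - I_Y\,\partial I_Y/\partial z\bigr)$. Moving this quantity to the right and dividing through by $I_X$ (which has a strictly positive limit under (H3)) yields exactly (\ref{convention}) after using the translation identity once more to re-expand $\partial I_Y/\partial z\bigl(\partial k_Y/\partial y -1\bigr)$ back into the form $\partial I_Y/\partial z\,\partial k_Y/\partial y + \partial I_Y/\partial y$.

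Finally, to pass from (\ref{convention}) to (\ref{conve}) I would divide the whole equation by $\partial I_Y/\partial z$ (nonzero generically by Theorem \ref{J1}, or handled as a limiting case otherwise) and invoke the 1V2L identities $I_Y/I_X\to \lambda_Y/\lambda_X$ and $(\partial I_X/\partial z)/(\partial I_Y/\partial z)\to \rho_X/\rho_Y$, noting that $D^+\sigma_0$ drops out entirely in the ratio. The constant term $-\partial I_X/\partial z + (I_Y/I_X)\partial I_Y/\partial z$ rearranges into $\rho_X/\rho_Y-\lambda_Y/\lambda_X$ on the right, producing (\ref{conve}). I expect the main obstacle to be purely bookkeeping: making sure that the $\tilde\sigma_0^3$ normalization, the asymmetry between the $k_X$ and $k_Y$ derivatives (one multiplies a skew, the other a skew plus a bare $\partial I_Y/\partial y$), and the short-time equivalence of model parameters and market observables are all aligned consistently; the translation-invariance identity $\partial I_Y/\partial y + \partial I_Y/\partial z = 0$ is what makes this bookkeeping collapse to the clean symmetric form of (\ref{conve}).
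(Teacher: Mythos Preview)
Your proposal is correct and follows essentially the same route as the paper: both arguments collapse the left-hand side of (\ref{conve-general}) in the 1V2L setting to the clean intermediate form $\lim_{T\to 0}\gamma^{-1}\bigl(I_X\,\partial I_X/\partial z - I_Y\,\partial I_Y/\partial z\bigr)$, equate it with the right-hand side, divide through by $I_X$, and then pass to model parameters via Theorem~\ref{J1}. The only cosmetic difference is ordering: the paper starts from the market-observable expression (\ref{rhoXrhoYremark}) and uses the 1V2L skew relation $\partial I_Y/\partial z=(\rho_Y/\rho_X)\,\partial I_X/\partial z$ to simplify, whereas you simplify first in model-parameter form (the $\tilde\sigma_0^2$ cancellation against $\tilde\sigma_0^3$) and translate afterwards. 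Your explicit invocation of the translation identity $\partial I_Y/\partial y=-\partial I_Y/\partial z$ to collapse and then re-expand the $k_Y$ terms is a slight detour for obtaining (\ref{convention}) itself---the paper's rearrangement gets there directly---but you need it anyway to reach the $(\partial k_Y/\partial y-1)$ form in (\ref{conve}), so nothing is wasted.
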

\begin{proof}
In the 1V2L model (with $\sigma_t^X=\lambda_X\sigma_t$, $\sigma_t^Y=\lambda_Y\sigma_t$), Theorem \ref{J1} implies that
$$\frac{\partial I_Y}{\partial z}=\frac{\rho_Y}{\rho_X}\frac{\partial I_X}{\partial z}.$$
Expanding the expression in \eqref{rhoXrhoYremark} and substituting for $\frac{\partial I_Y}{\partial z}$ we obtain
\begin{eqnarray*}
%&&\lim_{T\to 0}\left\{\frac{1}{\gamma }\left[ I_X\frac{\partial I_X}{\partial z}\frac{\partial k_X}{\partial y}
%+I_Y\left( \frac{\partial I_Y}{\partial z}\frac{\partial k_Y}{\partial y}+\frac{\partial I_Y}{\partial y}\right)\right.\right. \nonumber \\
%&&\qquad\qquad\qquad\left.\left.-\rho I_X\left( \frac{\partial I_X}{\partial z}\frac{\partial k_Y}{\partial y}+\frac{\partial I_Y}{\partial x}\right) 
%-\rho I_Y\frac{\partial I_X}{\partial z}\frac{\partial k_X}{\partial y}\right](x,x) \right\}\nonumber
&=&\lim_{T\to 0}\left[\frac{1}{\gamma^3}\left(\rho_XI_X-\rho_YI_Y\right)\frac{\partial I_X}{\partial z} \frac1{\rho_X} \left(I_X (I_X-\rho I_Y)+I_Y(I_Y-\rho I_X)\right)\right](x,x)\\
&=&\lim_{T\to 0}\left[\frac{1}{\gamma^3}\left(I_X\frac{\partial I_X}{\partial z}-I_Y\frac{\rho_Y}{\rho_X}\frac{\partial I_X}{\partial z}\right) \left(I_X (I_X-\rho I_Y)+I_Y(I_Y-\rho I_X)\right)\right](x,x)\\
&=&\lim_{T\to 0}\left[\frac{1}{\gamma^3}\left(I_X\frac{\partial I_X}{\partial z}-I_Y\frac{\partial I_Y}{\partial z}\right) \gamma^2\right](x,x)\\
&=&\lim_{T\to 0}\left[\frac{1}{\gamma}\left(I_X\frac{\partial I_X}{\partial z}-I_Y\frac{\partial I_Y}{\partial z}\right)\right](x,x)
\end{eqnarray*}
Equating this with the right hand side of \eqref{conve-general} and rearranging gives the desired result.  The equivalent result in terms of model parameters $\rho_X,\rho_Y,\lambda_X,\lambda_Y$ can be found by Theorem \ref{J1} and the fact that the at-the-money (ATM) implied volatility tends to the corresponding spot volatilty at time zero.
\end{proof}

\section{Optimal linear log-strike conventions}

Several strike conventions have been proposed in the literature. Some
classical examples (see for example Alexander and Venkatramanan (2011)
and Swindle (2014)) are of the form%
\begin{equation}
\left\{ 
\begin{array}{c}
k_{X}(x,y)=(1-a)x+ay \\ 
k_{Y}(x,y)=ax+(1-a)y%
\end{array}%
\right. ,  \label{linearstrikeconv}
\end{equation}%
for some real number $a$. For example, in Swindle (2014)\ the authors
suggest to take $k_{X}=\ln S_{t}^{Y}$ and $k_{Y}=\ln S_{t}^{X}.$ This choice
corresponds to (\ref{linearstrikeconv}) in the case $a=1.$ 
On the other hand, in Alexander and Venkatramanan (2011) the authors
mostly study the strike convention $k_{X}=\ln S_{t}^{X}$ and $k_{Y}=\ln
S_{t}^{Y}$, which is the case $a=0$. 
In this section we will find an optimal linear log-strike option of the form
(\ref{linearstrikeconv}). Given two strikes $k_{X},k_{Y}$ of the form (\ref%
{linearstrikeconv}), we have 
$$\frac{\partial k_X}{\partial y} = a, \quad \frac{\partial k_Y}{\partial y}=1-a,$$
and thus equation (\ref{conve}) reduces to 
%\begin{equation}
%\label{lasuperformulalinear}
%a\frac{\partial I_{1}}{\partial z}\left(
%1-\frac{\rho\lambda _{2}}{\lambda_1}\right) +(1-a)%
%\frac{\partial I_{2}}{\partial z}\left( \frac{\lambda _{2}}{\lambda_1}-\rho \right)  =\frac{\partial I_{1}}{\partial z} -\rho \frac{\partial I_{2}}{\partial z}.
%\end{equation}%
%That is,
%\begin{equation}
%\label{lasuperformulalinear2}
%a\left[\frac{\partial I_{1}}{\partial z}\left(1-\frac{\rho\lambda _{2}}{\lambda_1}\right) -\frac{\partial I_{2}}{\partial z}\left( \frac{\lambda _{2}}{\lambda_1}-\rho \right)\right]
% =\frac{\partial I_{1}}{\partial z}-\frac{\lambda _{2}}{\lambda_1}\frac{\partial I_{2}}{\partial z}.
%\end{equation}%
%Assume that $\rho_2\neq 0$. Then  Theorem \ref{J1} imply that $\frac{\partial I_{1}}{\partial z}=\frac{\rho_1}{\rho_2}\frac{\partial I_{2}}{\partial z}$ and then (\ref{lasuperformulalinear2}) becomes

\begin{equation}
\label{lasuperformulalinear3}
a\left[\frac{\rho_X}{\rho_Y}\left(1-\frac{\rho\lambda _{Y}}{\lambda_X}\right) -\left( \frac{\lambda _{Y}}{\lambda_X}-\rho \right)\right]
 =\frac{\rho_X}{\rho_Y}-\frac{\lambda _{Y}}{\lambda_X}.
\end{equation}%
or, alternatively,
$$
a\left[\rho_X(\lambda_X-\rho\lambda_Y)-\rho_Y(\lambda_Y-\rho\lambda_X)\right]
 =\lambda_X\rho_X-\rho_Y\lambda_Y.
$$
Then, if 
$$
[\rho_X(\lambda_X-\rho\lambda_Y)-\rho_Y(\lambda_Y-\rho\lambda_X)\neq 0
$$
there exists a unique $1$-STOSC, given by $a=a^\star$, where
\begin{equation}
\label{theoptimal}
a^\star=\frac{\rho_X\lambda_X-\rho_Y\lambda_Y}{\rho_X(\lambda_X-\rho\lambda_Y)-\rho_Y(\lambda_Y-\rho\lambda_X)}
\end{equation}

\begin{remark}
We note several interesting special cases related to this result:
\begin{enumerate}
\item The underlying prices $S^X$ and $S^Y$ are uncorrelated ($\rho=0$):
$$a^\star=\frac{\rho_X\lambda_X-\rho_Y\lambda_Y}{\rho_X \lambda_X-\rho_Y\lambda_Y}=1$$
Intuitively, thinking of an exchange option as a regular option with floating strike, if the strike is uncorrelated, then it is optimal to use the implied volatility corresponding to that floating strike (to the opposite leg of the spread), the `volatility look-up heuristic' of Swindle (2014).  
\item The two volatilities have the same level ($\lambda_X=\lambda_Y$):
$$a^\star=\frac{\rho_X-\rho_Y}{\rho_X(1-\rho)-\rho_Y(1-\rho)}=\frac{1}{1-\rho}$$
Notice that in this case $a^\star$ is no longer dependent on $\rho_X,\rho_Y,\lambda_X,\lambda_Y$.
\item The two asset to volatility correlations are equal ($\rho_X=\rho_Y$):
$$a^\star=\frac{\lambda_X-\lambda_Y}{\lambda_X-\rho\lambda_Y-\lambda_Y+\rho\lambda_X)}=\frac{1}{1+\rho}$$
Again, here $a^\star$ no longer depends on correlations $\rho_X,\rho_Y$ or levels $\lambda_X,\lambda_Y$.
\item Asset to volatility correlation is zero ( $\rho_Y=0$):
$$a^\star=\frac{\lambda_X\rho_X}{\rho_X(\lambda_X-\rho\lambda_Y)}=\frac{\lambda_X}{\lambda_X-\rho\lambda_Y}$$
Similarly, if $\rho_X=0$, then $a^\star=\frac{\lambda_Y}{\lambda_Y-\rho\lambda_X}$.  In these cases we can also conclude (since $\lambda_X,\lambda_Y>0$) that $\rho>0$ corresponds to $a^\star>1$ (and $\rho<0$ to $a^\star<1$), intuitive for a floating strike option in which the strike tends to move away as $S$ moves towards it.  
\end{enumerate}
\label{specialcases}
\end{remark}

\begin{remark}
Similarly to the equivalence expressions within Corollary \ref{STOSCcorollary}, we note that since $\frac{\lambda_Y}{\lambda_X}=\lim_{T\to 0}\frac{I_Y}{I_X}$ and $\frac{\rho_Y}{\rho_X}=\lim_{T\to 0}\frac{\partial I_Y/ \partial z}{\partial I_X/ \partial z}$,
%is the limit of the corresponding vanilla ATM implied volatility levels and $\frac{\rho_}{\lambda_X}$ is the limit of the corresponding vanilla ATM implied volatility skews, 
our results can be transformed from model parameters to market observables. Thus, the optimal strke convention can be computed from equation (\ref{lasuperformulalinear3}), needing only to know $\rho$ (often estimated from price histories) and the short-time limits of the corresponding vanilla implied volatility levels and skews.
\end{remark}

\newpage
\section{Numerical examples}

In order to investigate the performance of the optimal log-linear strike convention given by \eqref{theoptimal}, we consider a number of numerical examples of spread option pricing with different assumptions for parameter values.  In each case we compare our optimal choice of $a^\star$ to results from using the other common strike conventions of `at-the-money' (ATM) implied volatilities for each asset (i.e. $a=0$) or the volatility look-up heuristic (i.e. $a=1$).  As a simple and commonly-used benchmark, we use the Heston Model throughout, but conduct tests under a large variety of different parameter sets.  \\

Volatility dynamics are given by:
 $$ d\sigma_{t}^{2}   =\kappa\left(  \theta-\sigma_{t}^{2}\right)  dt+\nu\sqrt{\sigma_{t}^{2}}dZ_{t}^{(3)},$$
within the 1V2L model, a version of model \eqref{themodel} introduced before Corollary \ref{STOSCcorollary}.

\subsection{Test Cases}

For now, we consider two test cases with parameters as described below, varying only $\rho_{Y}$ between cases:
\begin{itemize}
\item option maturity:  $T=0.05$ (a few weeks)
\item volatility process ($\sigma_t$) parameters: $\kappa=1.5,\theta=0.15,\nu=0.5,\sigma_0=0.15$
\item volatility scaling factors:  $\lambda_X=1.5,\lambda_Y=1$
\item correlation parameters:  $\rho=0.5,\rho_{X}=-0.4$, and $\rho_{Y}=-0.6$ or $\rho_{Y}=0.4$
\end{itemize}
Note that Test Case 1 with $\rho_{Y}=-0.6$ corresponds to two downward-sloping implied volatility skews for the two assets, while Test Case 2 with $\rho_Y=0.4$ produces an upwards skew for the second asset.  The top row of Figure \ref{TestCases} shows these implied volatility plots, generated by pricing single asset options under the Heston model.  We then use these saved implied volatilities to price an exchange option with payoff $(S^X_T-S^Y_T)^+$ across a range of moneyness, with $S^X_0=100$ fixed and $S^Y_0\in [80,120]$.  Margrabe's formula with the three different strike conventions (choices of $a$) is compared against an `exact solution' using 1,000,000 simulated paths (with the constant volatility solution as a control variate).  \\

\begin{figure}[htbp]
\centering
\includegraphics[width=0.47\textwidth]{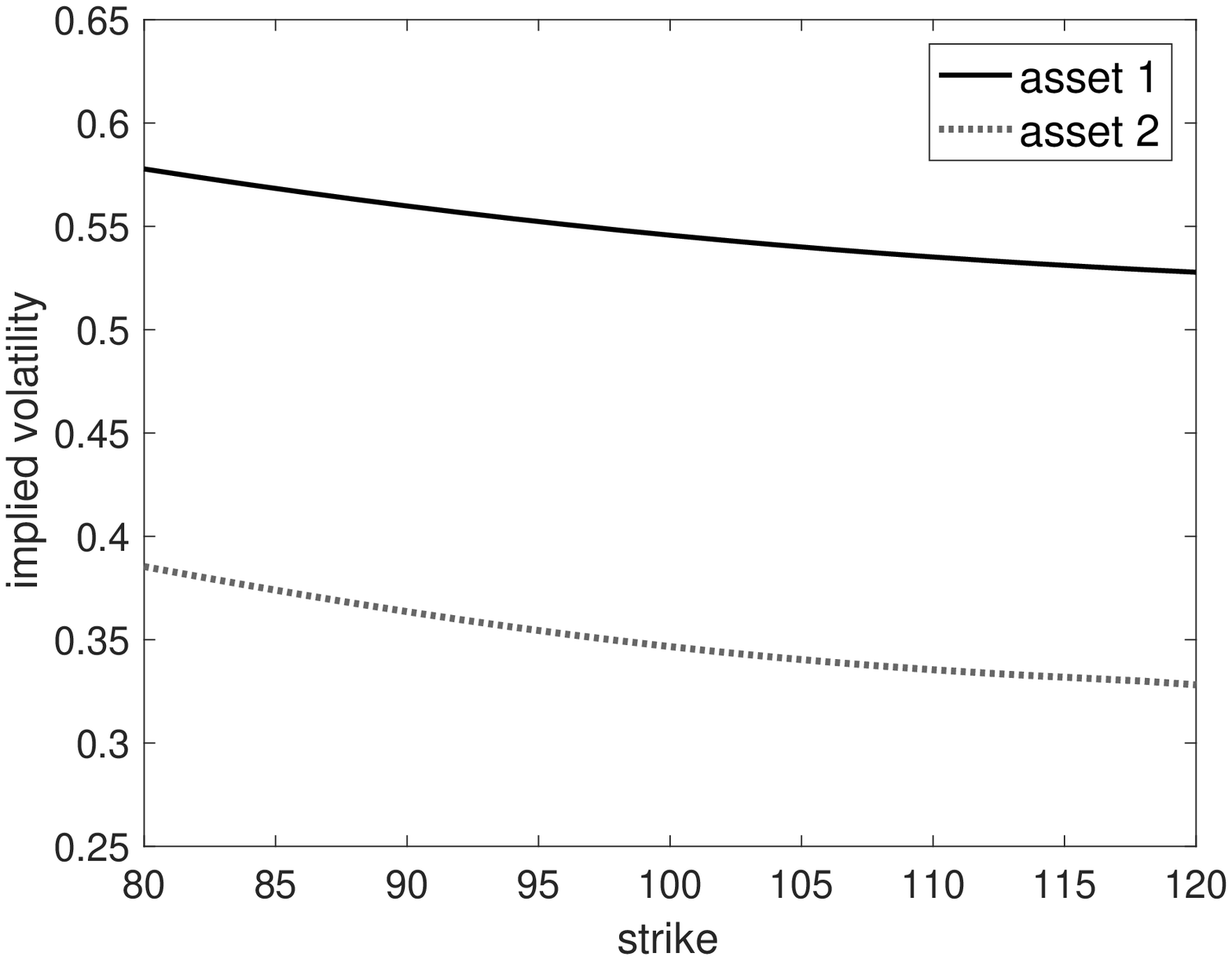}
\includegraphics[width=0.47\textwidth]{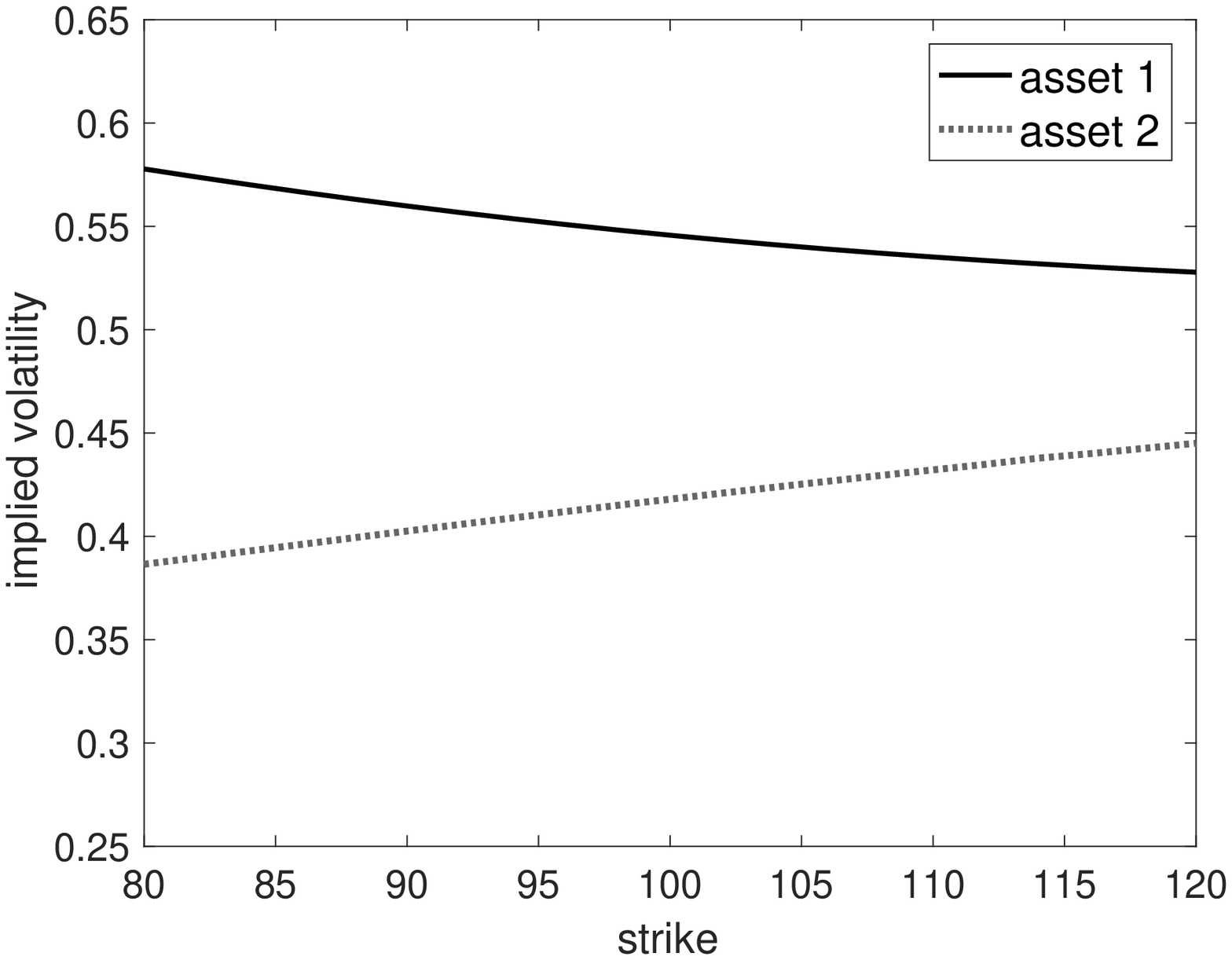}\\
\includegraphics[width=0.47\textwidth]{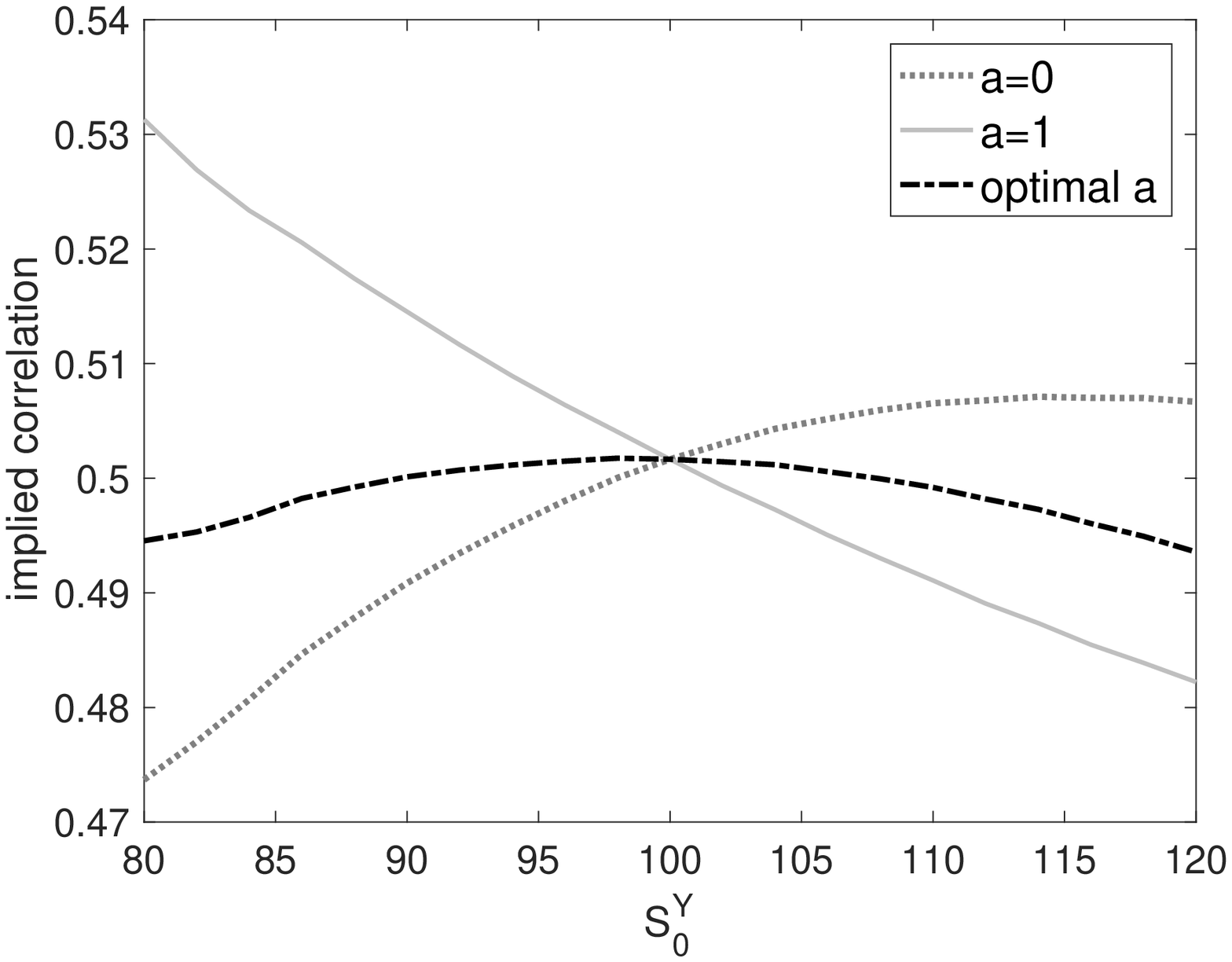}
\includegraphics[width=0.47\textwidth]{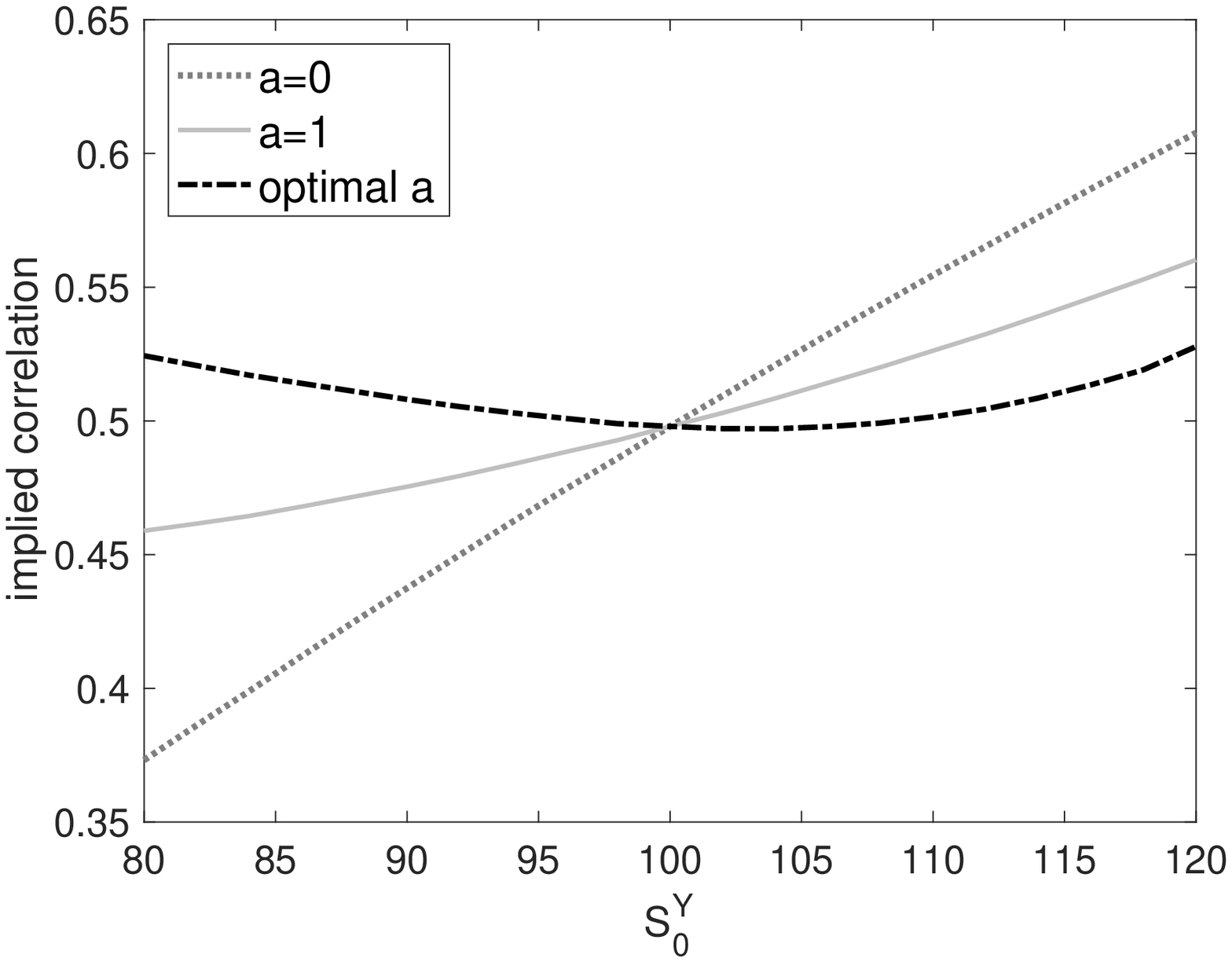}\\
\includegraphics[width=0.47\textwidth]{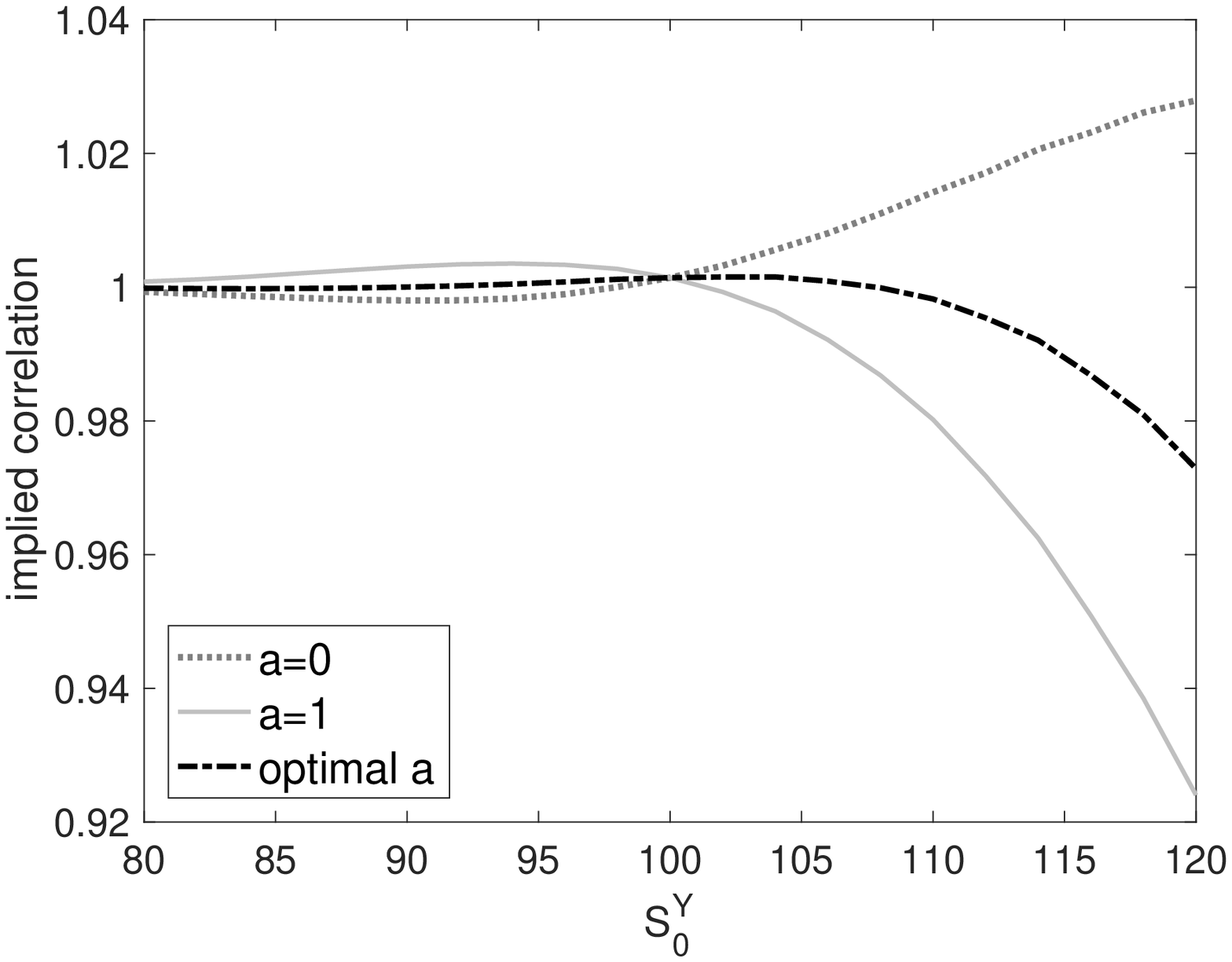}
\includegraphics[width=0.47\textwidth]{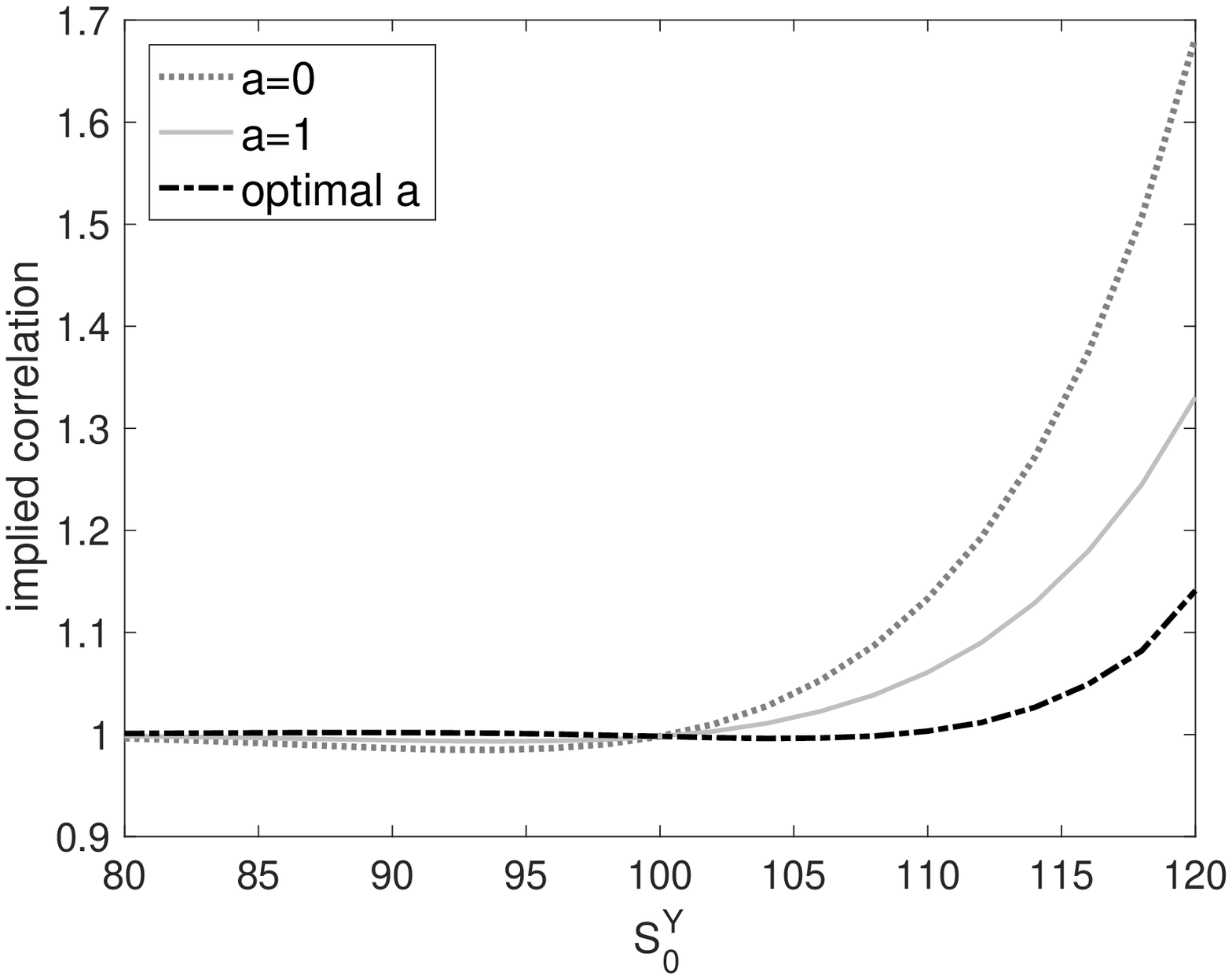}\\
\includegraphics[width=0.47\textwidth]{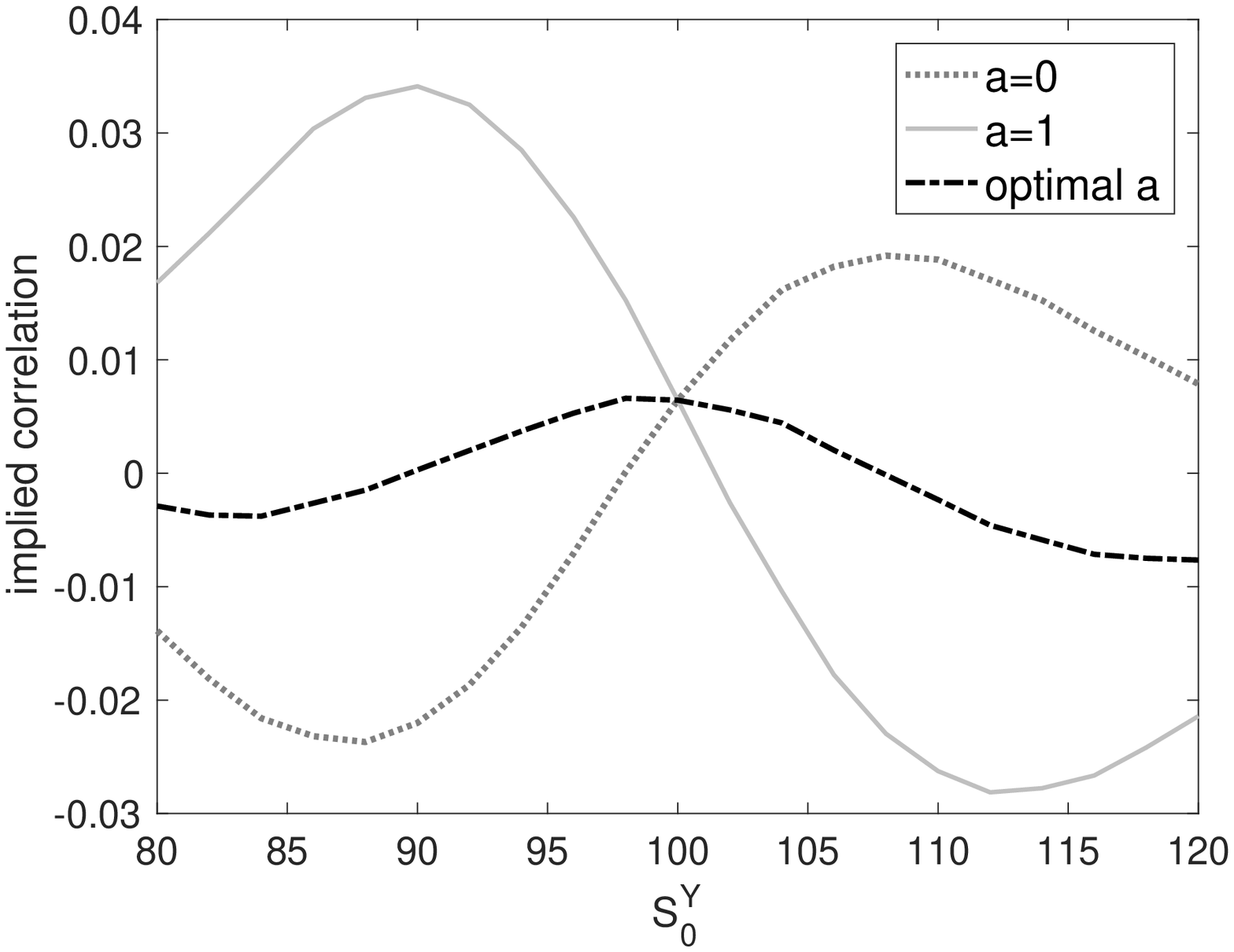}
\includegraphics[width=0.47\textwidth]{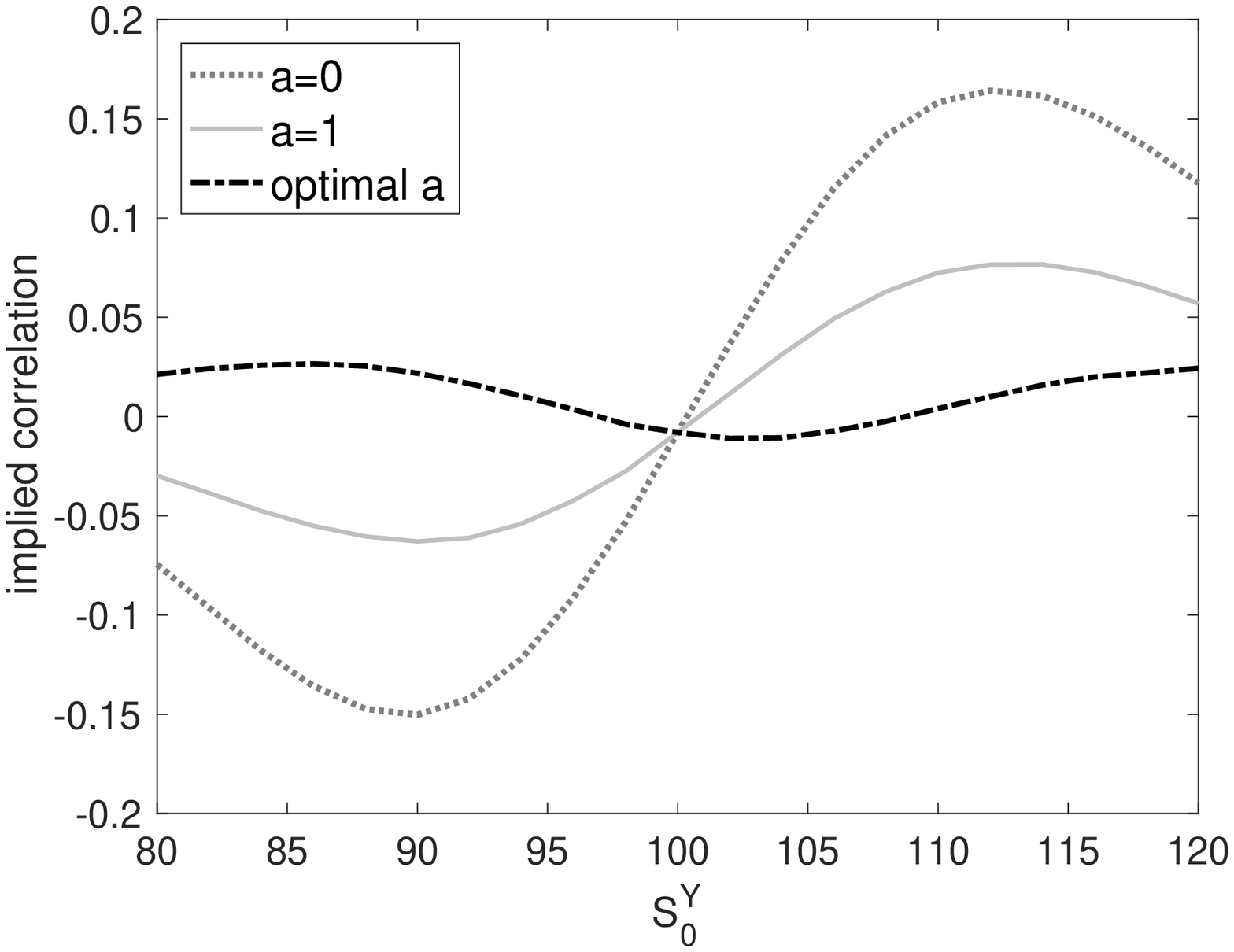}
\caption{Test case results against moneyness: implied volatility skews (first row), implied correlations (second row), spread option price ratios (third row) and price differences (fourth row).  Left column is Test Case 1 ($\rho_{Y}=-0.6$) and right column is Test Case 2 ($\rho_{Y}=0.4$).} 
\label{TestCases}
\end{figure}
 
Rows 2-4 of Figure \ref{TestCases} provide three alternative ways of visualizing the performance of our optimal strike convention (the darkest line) across moneyness compared with the other approaches: 
%\begin{enumerate}[(i)]
%\item by converting spread option prices back to implied correlations $\hat\rho$ in order to compare with the model correlation of $\rho=0.5$, recalling Remark \ref{impliedcorrelationremark};
%\item by plotting the ratio of Margrabe price to exact price; 
%\item by plotting the difference between Margrabe and exact.  
%\end{enumerate}
 (i) by converting spread option prices back to implied correlations $\hat\rho$ in order to compare with the model correlation of $\rho=0.5$, recalling Remark \ref{impliedcorrelationremark};
 (ii) by plotting the ratio of Margrabe price to exact price; 
 (iii) by plotting the difference between Margrabe and exact.  
Note that the ATM values ($S^Y_0=100$) are equal across strike conventions since they all coincide at this point, and $\rho\approx\hat{\rho}$.  However, moving away from the ATM point ($S^Y_0=100$), we can clearly see that the optimal $a^\star$ performs significantly better than the other contenders. \\

On Test Case 1 (left column) we see that $a=0$ significantly overprices the spread option ($\hat\rho<\rho$) when $S^Y_0<S^X_0$ (the `in the money', or ITM, case) and underprices ($\hat\rho>\rho$) when $S^Y_0>S^X_0$ (the `out of the money', or OTM, case), while $a=1$ does the opposite.  It might therefore appear that a rather arbitrary midpoint convention of $a=1/2$ could work as a compromise between the other rules, but this is not surprising considering that $a^\star=0.429$ is optimal in this case.  In contrast, on Test Case 2 (right column), $a^\star=1.917$ is optimal, and thus both the other strike conventions overprice ITM and underprice OTM, sometimes by a large amount.  Our approach keeps absolute errors in Case 1 below 0.01 and in Case 2 below 0.03 across different $S^Y_0$ values.  This consistent pricing of options at different moneyness levels is a major advantage.  In practice an indicative quote on a different spread option in the market could therefore more accurately be used to price another contract.  \\

Although dominated by skew, the implied correlation plots in Figure \ref{TestCases} reveal a slight `frown' in the first test case, as sometimes witnessed in the market. Ideally we would like to observe a flat line at $\hat\rho=0.5$, as the theory dictates should hold with short enough $T$ and near the money, but our results are nonetheless encouraging.  Note that when looking at relative pricing errors in the third row of plots, errors unsurprisingly dominate for OTM options which always have zero intrinsic value and much lower prices than ITM.  It is more interesting to note the patterns in the case of absolute errors just below, in particular that deep ITM and OTM options show less pricing error than moderately ITM and OTM.  This effect can be explained by the fact that there is less (model-dependent) extrinsic value to accurately price.  

\subsection{Extensive Numerical Investigations}

Instead of considering individual cases of parameter sets as above, we now test the approach across a wide range of different parameter values and in particular correlation structures.  We use the following ranges for our parameters:\footnote{Note that sometimes round numbers (and zeros) are specifically avoided due to the unrealistically large values of $|a^\star|$ they can produce in \eqref{theoptimal}. This is not unreasonable considering that data fitting rarely produces round numbers!}

\begin{itemize}
\item $T \in [0.05, 0.1, 0.25, 0.5, 1]$
\item $S_0^X=100$, $S_0^Y \in[80,84,\ldots,100,\ldots,116,120]$
\item $\lambda_X=1$, $\lambda_Y=1.24$ (note: tests for different $\lambda$s perform similarly)
\item Heston parameters (as before): $\kappa=1.5,\theta=0.15,\nu=0.5,\sigma_0=0.15$
\item $\rho\in[-0.9,-0.7,-0.5,-0.3,-0.1,0.1,0.3,0.5,0.7,0.9]$
\item $\rho_X\in[-0.72,   -0.42,   -0.12,    0.18,   0.48]$
\item $\rho_Y\in[-0.61,   -0.31,   -0.01,    0.29,    0.59]$
\end{itemize}

We shall compare results using a variety of commonly-used pricing errors such as Mean Absolute Error (MAE), Mean Absolute Percentage Error (MAPE), Root Mean Squared Error (RMSE), Maximum Absolute Error (MaxAE, i.e. worst case), as well as considering the mean standard deviation (MStd) of errors across moneyness ($S_0^Y$ grid).  The first two correspond to the price ratio and price difference plots in Figure \ref{TestCases} while the last of these is a way to assess the methodology's aim of pricing consistently across moneyness, or in other words flattening the implied correlation skew or frown we would otherwise observe. \\  

Table \ref{MAE1} shows the MAE (between simulated prices and Margrabe prices), averaging over the $S_0^Y$, $\rho_X$ and $\rho_Y$ grids\footnote{Each number in the table is thus an average of $11\times 5\times 5=275$ cases (gridpoints).}, for the different choices of $\rho$, $T$ and of course $a$.  We only show half of our $\rho$ values here as a reasonable sample.  When calculating average errors, we first exclude parameter sets which lead to a non-valid (non positive definite) correlation matrix.  This is 19.6\% of the cases overall, and around half of the cases for the most extreme values of $\rho=\pm 0.9$.  We also exclude a very small number of OTM cases where Monte Carlo prices are less than 1 cent.  While columns 1 to 3 of the table compare the alternative strike conventions of $a=0$ and $a=1$ with our optimal $a^\star$, the final column shows the `at-the-money (ATM) error', meaning the error averaged over only the cases where $S_0^Y=S_0^X=100$.  Recall from Figure \ref{TestCases} that ATM prices agree across all strike conventions (for any $a$) since they all collapse onto the same choice of $k_X,k_Y$.  As discussed in Section 2, ATM error is zero as $T\to 0$, but is non-zero here since $T\ge 0.05$.  In some sense, ATM error is thus the best we could hope for our strike convention to reach when averaging across all moneyness values.  \\

\begin{table}
 \begin{center}
  \begin{tabular}{|l|c||ccc|c|c|}
\hline
& $\rho$ & $a=0$ & $a=1$ & $a^\star$& bounded $a^\star$ & ATM error \\
\hline
\multirow{5}{*}{\rotatebox[origin=c]{90}{$T=0.05$}} 
&-0.7&	0.0646&	0.053&	0.0069&	0.0069&	0.0036\\
&-0.3&	0.0591&	0.0241&	0.0066&	0.0066&	0.0048\\
&0.1&	0.0567&	0.0064&	0.0107&	0.01&	0.005\\
&0.5&	0.0408&	0.0189&	0.0193&	0.0117&	0.0039\\
&0.9&	0.0147&	0.0121&	0.005&	0.0052&	0.0018\\
\hline
\multirow{5}{*}{\rotatebox[origin=c]{90}{$T=0.1$}} 
&-0.7&	0.1108&	0.0838&	0.0117&	0.0117&	0.01\\
&-0.3&	0.1064&	0.0395&	0.0132&	0.0132&	0.012\\
&0.1&	0.1093&	0.0159&	0.0217&	0.0203&	0.0121\\
&0.5&	0.091&	0.0439&	0.0426&	0.0277&	0.0102\\
&0.9&	0.0369&	0.0304&	0.0126&	0.0138&	0.004\\
\hline
\multirow{5}{*}{\rotatebox[origin=c]{90}{$T=0.25$}} 
&-0.7&	0.1943&	0.1413&	0.039&	0.039&	0.0426\\
&-0.3&	0.1932&	0.0787&	0.0465&	0.0465&	0.0487\\
&0.1&	0.2074&	0.0497&	0.0561&	0.0536&	0.0441\\
&0.5&	0.1933&	0.1004&	0.095&	0.0681&	0.0345\\
&0.9&	0.1145&	0.0948&	0.0434&	0.0465&	0.0181\\
\hline
\multirow{5}{*}{\rotatebox[origin=c]{90}{$T=1$}} 
&-0.7&	0.3634&	0.3097&	0.2243&	0.2243&	0.2252\\
&-0.3&	0.3976&	0.2855&	0.2632&	0.2632&	0.2724\\
&0.1&	0.4134&	0.2436&	0.2519&	0.2488&	0.2398\\
&0.5&	0.3898&	0.2575&	0.2464&	0.2195&	0.1856\\
&0.9&	0.3246&	0.2735&	0.1471&	0.156&	0.102\\
\hline
  \end{tabular}
 \end{center}
\caption{Comparison of strike conventions by Mean Absolute Error (MAE) averaged across $\rho_X$, $\rho_Y$ and $S^Y_0$ grids, varying $\rho$ and $T$ as labelled on left.}
\label{MAE1}
\end{table}

As we see in Table \ref{MAE1}, the optimal $a^\star$ outperforms the other strike conventions in the vast majority of cases, often cuts MAE by more than 50\% versus $a=0$ or $a=1$, and comes much closer to the ATM error.  Interestingly, $a=1$ is much more competitive than $a=0$ and seems to slightly outperform $a^\star$ as a convention when $\rho$ is near zero.  However this is not so surprising considering that $a^\star$ is often near 1 anyway in such cases, in line with the first special case in Remark \ref{specialcases} earlier.  Furthermore, the weakest cases of performance can often be attributed to unusually large (or very negative) values of $a^\star$, since they imply picking implied volatilities from deep ITM or OTM vanilla options, especially when $\left|S_0^X-S_0^Y\right|$ is not small.  This is of course also impractical in the real world.  As a possible improvement, in the final column of the table we show the average pricing errors for when bounding $a^\star$ in the range $[-1,2]$.  The extreme $a^\star$ situation is more common for cases of positive and fairly high $\rho$.  For example, for $\rho=0.5$ here, $a^\star$ happens to reach as high as 7.6 and as low as -3.7 at some gridpoints.  Therefore, while the bounding of $a^\star$ in $[-1,2]$ does not affect all rows, for $\rho=0.5$ it narrows the gap between $a^\star$ and ATM error by about 50\%.  Tests on data would be required to better assess the impact of this point, but we leave this for further studies.

\begin{figure}[htbp]
\centering
\includegraphics[width=0.85\textwidth]{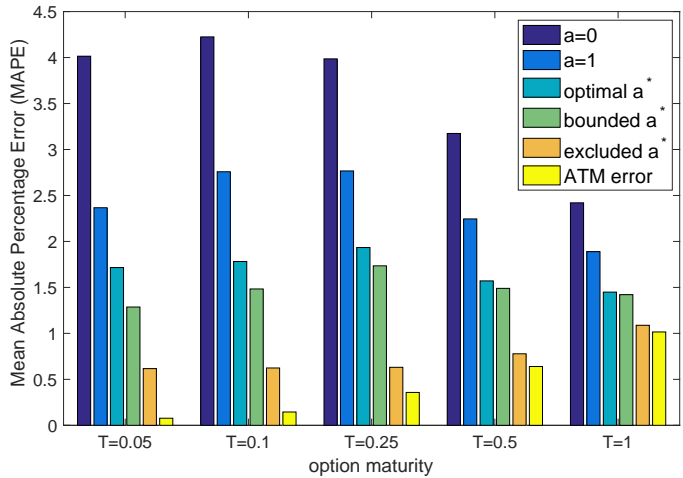}
\caption{Comparison of strike conventions by Mean Absolute Percentage Errors (MAPE) for various $T$ (incl. ATM error and bounding / excluding high $|a^\star|$)}
\label{barchart}
\end{figure}

In addition to our earlier parameter set with very short maturity $T=0.05$, we are also interested in investigating the performance of the approach for larger $T$.  Moving down Table \ref{MAE1}, results for longer maturity reveal that even without bounding (or excluding) trickier cases of high $|a^\star|$, our approach continues to perform well, always substantially outperforming $a=0$ and often significantly outperforming $a=1$ especially for higher $|\rho|$.  Interestingly, although the theory for $a^\star$ was derived for short time to maturity, we see that the approach maintains a competitive advantage for large $T$, even $T=1$.  Overall MAE levels are higher in all cases when $T$ increases, but the increase stems from option prices being higher and from ATM error increasing, while the gap between $a^\star$ and ATM error narrows to near zero.  Since larger $T$ clearly implies larger option prices, it is insightful here to also consider MAPE.  Figure \ref{barchart} reveals the average MAPE across all cases (including the 10 values of $\rho$) split by $T$ this time.  Seen in percentage terms, ATM error grows steadily with $T$, but error from all strike conventions actually falls.  Our strike convention $a^\star$ maintains a 0.5\%-1.0\% advantage over $a=1$ across maturities, and the bounded version improves this slightly.  Moreover, if we exclude the more challenging gridpoints with $a^\star \notin [-1,2]$, the plot shows that MAPE falls significantly to be very close to ATM error especially for larger $T$.  \\

\begin{table}
\hskip -1.5cm
  \begin{tabular}{|l|c|cccc|c|ccc|c|}
\hline
&& \multicolumn{5}{|c|}{No exclusions (normal case)}  &   \multicolumn{4}{|c|}{$a^\star \notin [-1,2]$ excluded} \\
\hline
& $\rho$ & $a=0$ & $a=1$& $a^\star$ & bounded & ATM &$a=0$ & $a=1$ & $a^\star$ & ATM  \\
\hline
\multirow{5}{*}{\rotatebox[origin=c]{90}{$T=0.1$}} 
&MAE&	0.0982&	0.0502&	0.022&	0.0177&	0.01&	0.0949&	0.0453&	0.013&	0.01\\
&MAPE&	4.23\%&	2.76\%&	1.78\%&	1.48\%&	0.14\%&	3.18\%&	1.89\%&	0.62\%&	0.15\%\\
&RMSE&	0.1288&	0.0649&	0.038&	0.0248&	0.0121&	0.1235&	0.0588&	0.016&	0.0122\\
&MaxAE&	0.3616&	0.1768&	0.2338&	0.1041&	0.0245&	0.3445&	0.1563&	0.0408&	0.0244\\
&MStd&	0.1137&	0.0567&	0.0169&	0.0149&	n/a&	0.1098&	0.051&	0.0098&	n/a\\

\hline
\multirow{5}{*}{\rotatebox[origin=c]{90}{$T=0.25$}} 
&MAE&	0.1908&	0.1029&	0.0586&	0.0512&	0.0381&	0.1806&	0.0908&	0.04&	0.0381\\
&MAPE&	3.99\%&	2.77\%&	1.93\%&	1.73\%&	0.36\%&	2.53\%&	1.51\%&	0.63\%&	0.37\%\\
&RMSE&	0.2499&	0.1307&	0.0897&	0.0688&	0.0459&	0.2356&	0.1154&	0.0481&	0.0455\\
&MaxAE&	0.7359&	0.3678&	0.4695&	0.274&	0.0914&	0.6899&	0.3145&	0.119&	0.0881\\
&MStd&	0.2188&	0.1067&	0.0365&	0.0336&	n/a&	0.2078&	0.0928&	0.0205&	n/a\\
\hline
\end{tabular}
\caption{Comparison of all results for five different error measures with all points included (left) and excluding $a^\star<-1,a^\star>2$ cases (right).  Results shown for all $T=0.05$ scenarios (top) and all $T=0.25$ scenarios (bottom).}
\label{allerrors}
\end{table}

We focused more on MAE above primarily due to the observation in Figure 1 that relative errors show a clear asymmetry between ITM and OTM which could distort strike convention comparisons in different cases.  However, Table \ref{allerrors} illustrates how our 1-STOSC approach compares to the other conventions across all our different error measures when averaging over all the scenarios for $T=0.1$ and $T=0.25$.   The left half of the table includes all cases of $a^\star$ (as in Table \ref{MAE1}), while the right half simply excludes cases where $a^\star<-1$ or $a^\star>2$, as mentioned above in Figure \ref{barchart}.  The fourth column also shows the middle-ground of a `bounded' $a^\star$ within this range instead of excluding these gridpoints.\footnote{Note that the $a=0$, $a=1$ and ATM columns also change slightly (often improve a little) when excluding these more extreme cases from the average error.} Throughout the table the optimal strike convention performs very well again, and depending on the error measure used, bounding $a^\star$ can cut the gap to ATM error in half, while exclusions may bring us almost all the way.  However, what is especially crucial is the clear benefit $a^\star$ already provides relative to a commonly-used choice such as ATM implied vols ($a=0$), often reducing error by a factor of about 3 or 4.  
 
\begin{figure}[htbp]
\centering
\includegraphics[width=0.95\textwidth]{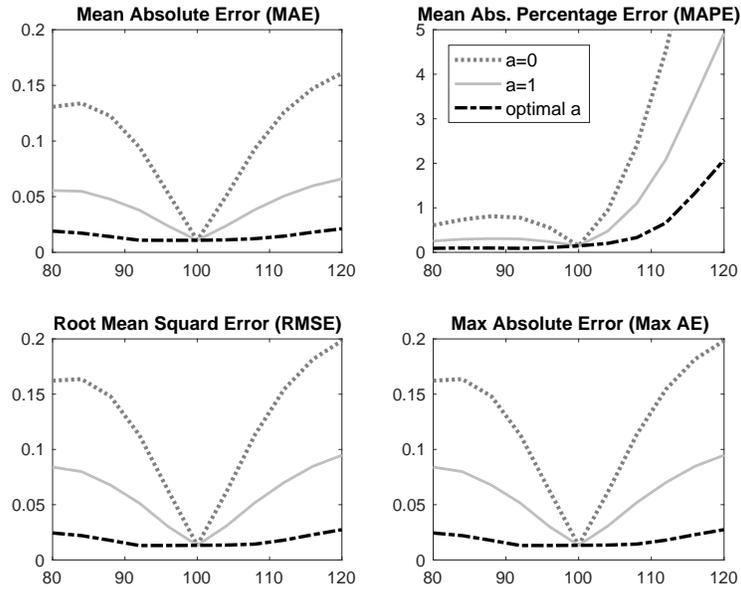}
\caption{Comparison of average errors against moneyness for all $T=0.1$ cases, using four different error measures}
\label{moneynessplot}
\end{figure}

Finally, before concluding we return to the question of consistency across moneyness, a key strength of the approach which is captured well by the impressive final row of the table called `MStd' (maximum standard deviation), but is also visually striking in Figure \ref{moneynessplot}.  Here we plot average errors across moneyness (against $S_0^Y$ again) average over all the $T=0.1$ grids.  Backing up the theory derived in earlier sections, the stability of errors across moneyness is very prominent, especially in comparison with $a=0$ or $a=1$, the commonly-used alternatives.  Indeed, to our knowledge there is no other approach which adapts the strike convention to different scenarios in order to achieve such clear-cut error reduction.  
 
\section{Conclusion}
We have presented a new and systematic methodology to construct an optimal strike convention for spread option pricing in the context of stochastic volatility models. Although its derivation is rather technical, this approach is simple to use and is based on the computation of the corresponding vanilla implied volatility levels and skews.  Thus, market observables can be taken as inputs in a model-independent setting, strengthening the appeal of the technique.  The obtained numerical results in Section 5 confirm its strong performance, especially compared to the limited alternatives commonly used in industry.  There is more interesting work to be done in this direction, for example extending from exchange options to any spread options or to three-asset spreads.  Data analysis and further numerical investigations would also be useful, including adapting to other stochastic volatility processes such as fractional models.  We thus see this paper as the starting point to a broadly-applicable and valuable new pricing tool designed to complement nicely existing practice in the financial markets.

\end{document}